\documentclass[12pt, draftclsnofoot, onecolumn]{IEEEtran}
\usepackage{amsmath, mathrsfs, mathtools}
\usepackage{amsfonts}
\usepackage{amssymb}
\usepackage{color}
\usepackage{multirow}
\usepackage{stmaryrd}
\allowdisplaybreaks
\usepackage{caption}
\usepackage{subcaption}
\usepackage{float}
\usepackage{amsfonts}
\usepackage{amssymb}
\usepackage{color}
\usepackage{multirow}
\usepackage{graphicx}
\usepackage{tcolorbox}
\usepackage[numbers,sort&compress]{natbib}

\usepackage{geometry}
\geometry{
	a4paper,
	left=25.4mm,
	top=25.4mm,
	right=25.4mm,
	bottom=25.4mm
}

\allowdisplaybreaks

\newcommand{\hmH}{h_{m,{\sf H}}}
\newcommand{\hmV}{h_{m,{\sf V}}}
\newcommand{\Sigmah}{{\Sigmam}_{\hv}}
\newcommand{\Sigmay}{{\Sigmay}_{\yv}}
\newcommand{\WH}{W_{\sf H}}
\newcommand{\WV}{W_{\sf V}}
\newcommand{\gammaH}{\gamma_{\sf H}}
\newcommand{\gammaV}{\gamma_{\sf V}}
\def\Nvar{N_0}
\def\Pdl{P_{dl}}
\def\Tdl{T_{dl}}

\def\rect{{\ttr\tte\ttc\ttt}}

\def\wh{\widehat}

\makeatletter
\setlength{\@fptop}{0pt}
\makeatother

\usepackage{mathbbol}

\def\mindex#1{\index{#1}}



%
%

\def\sq{\hbox{\rlap{$\sqcap$}$\sqcup$}}
\def\qed{\ifmmode\sq\else{\unskip\nobreak\hfil
\penalty50\hskip1em\null\nobreak\hfil\sq
\parfillskip=0pt\finalhyphendemerits=0\endgraf}\fi\medskip}


\long\def\defbox#1{\framebox[.9\hsize][c]{\parbox{.85\hsize}{%
\parindent=0pt
\baselineskip=12pt plus .1pt      
\parskip=6pt plus 1.5pt minus 1pt 
 #1}}}


\long\def\beginbox#1\endbox{\subsection*{}%
\hbox{\hspace{.05\hsize}\defbox{\medskip#1\bigskip}}%
\subsection*{}}

\def\endbox{}


\def\rank{{\rm rank\,}}

\newsavebox{\junk}
\savebox{\junk}[1.6mm]{\hbox{$|\!|\!|$}}






\def\bC{{\mathbb C}}

\def\bE{{\mathbb E}}

\def\bR{{\mathbb R}}
\def\bS{{\mathbb S}}

\def\bfB{{\bf B}}

\def\bfU{{\bf U}}

\def\bfy{{\bf y}}
\def\bfz{{\bf z}}





\def\ttc{{\mathtt c}}

\def\tte{{\mathtt e}}

\def\ttr{{\mathtt r}}

\def\ttt{{\mathtt t}}


\def\sfH{{\sf H}}

\def\bfmath#1{{\mathchoice{\mbox{\boldmath$#1$}}%
{\mbox{\boldmath$#1$}}%
{\mbox{\boldmath$\scriptstyle#1$}}%
{\mbox{\boldmath$\scriptscriptstyle#1$}}}}




\def\bfmY{\bfmath{Y}}

\def\bfmhhaY{\bfmath{\hhaY}} 
\def\bfmhhaY{\hbox to 0pt{$\widehat{\bfmY}$\hss}\widehat{\phantom{\raise 1.25pt\hbox{$\bfmY$}}}}









\def\til={{\widetilde =}}



\def\clC{{\cal C}}

\def\clG{{\cal G}}

\def\clN{{\cal N}}






 \def\FRAC#1#2#3{\genfrac{}{}{}{#1}{#2}{#3}}

\def\ddtp{{\mathchoice{\FRAC{1}{d^{\hbox to 2pt{\rm\tiny +\hss}}}{dt}}%
{\FRAC{1}{d^{\hbox to 2pt{\rm\tiny +\hss}}}{dt}}%
{\FRAC{3}{d^{\hbox to 2pt{\rm\tiny +\hss}}}{dt}}%
{\FRAC{3}{d^{\hbox to 2pt{\rm\tiny +\hss}}}{dt}}}}

\def\average#1,#2,{{1\over #2} \sum_{#1}^{#2}}

\def\eye(#1){{\bf(#1)}\quad}


\newtheorem{theorem}{{\bf Theorem}}

\newtheorem{corollary}{Corollary}

\newtheorem{remark}{{\bf Remark}}

\newtheorem{lemma}[theorem]{{\bf Lemma}}

\def\eq#1/{(\ref{e:#1})}

\newcommand{\beqn}[1]{\notes{#1}%
\begin{eqnarray} \elabel{#1}}

\newcommand{\eeqn}{\end{eqnarray} }

\newcommand{\beq}[1]{\notes{#1}%
\begin{equation}\elabel{#1}}

\newcommand{\eeq}{\end{equation}}

\def\bdes{\begin{description}}
\def\edes{\end{description}}




%

\newcounter{rmnum}

\newcounter{anum}


%
{\end{list}}

\def\ass(#1:#2){(#1\ref{#1:#2})}

\def\ritem#1{
\item[{\sf \ass(\current_model:#1)}]
}

\newenvironment{recall-ass}[1]{%
\begin{description}
\def\current_model{#1}}{
\end{description}
}




\def\herm{{\sfH}}


\def\cg{{\clC\clN}} 
\newcommand{\normd}[1]{{\left\vert\kern-0.25ex\left\vert\kern-0.25ex\left\vert #1 
		\right\vert\kern-0.25ex\right\vert\kern-0.25ex\right\vert}}

\newcommand{\hul}{\hv_{ul}}

\newcommand{\hdlk}{\hv_{dl,k}}

\setlength\unitlength{1mm}

\long\def\comment#1{}




\newcommand{\av}{{\bf a}}
\newcommand{\bv}{{\bf b}}

\newcommand{\dv}{{\bf d}}

\newcommand{\fv}{{\bf f}}
\newcommand{\gv}{{\bf g}}
\newcommand{\hv}{{\bf h}}

\newcommand{\sv}{{\bf s}}

\newcommand{\uv}{{\bf u}}
\newcommand{\wv}{{\bf w}}

\newcommand{\xv}{{\bf x}}
\newcommand{\yv}{{\bf y}}
\newcommand{\zv}{{\bf z}}


\newcommand{\Am}{{\bf A}}
\newcommand{\Bm}{{\bf B}}
\newcommand{\Cm}{{\bf C}}
\newcommand{\Dm}{{\bf D}}

\newcommand{\Fm}{{\bf F}}
\newcommand{\Gm}{{\bf G}}
\newcommand{\Hm}{{\bf H}}
\newcommand{\Id}{{\bf I}}
\newcommand{\Jm}{{\bf J}}

\newcommand{\Lm}{{\bf L}}

\newcommand{\Om}{{\bf O}}
\newcommand{\Pm}{{\bf P}}
\newcommand{\Qm}{{\bf Q}}
\newcommand{\Rm}{{\bf R}}
\newcommand{\Sm}{{\bf S}}
\newcommand{\Tm}{{\bf T}}
\newcommand{\Um}{{\bf U}}
\newcommand{\Wm}{{\bf W}}
\newcommand{\Vm}{{\bf V}}
\newcommand{\Xm}{{\bf X}}
\newcommand{\Ym}{{\bf Y}}


\newcommand{\Ac}{{\cal A}}

\newcommand{\Cc}{{\cal C}}

\newcommand{\Ec}{{\cal E}}

\newcommand{\Gc}{{\cal G}}

\newcommand{\Ic}{{\cal I}}
\newcommand{\Jc}{{\cal J}}
\newcommand{\Kc}{{\cal K}}

\newcommand{\Mc}{{\cal M}}
\newcommand{\Nc}{{\cal N}}
\newcommand{\Oc}{{\cal O}}

\newcommand{\Sc}{{\cal S}}

\newcommand{\Vc}{{\cal V}}


\newcommand{\lambdav}{\hbox{\boldmath$\lambda$}}

\newcommand{\Gammam}{\hbox{\boldmath$\Gamma$}}

\newcommand{\Sigmam}{\hbox{\boldmath$\Sigma$}}
\newcommand{\Phim}{\hbox{\boldmath$\Phi$}}

\newcommand{\Psim}{\hbox{\boldmath$\Psi$}}




\newcommand{\trace}{{\hbox{tr}}}
\renewcommand{\arg}{{\hbox{arg}}}

\newcommand{\transp}{{\sf T}}



\title{Dual-Polarized FDD Massive MIMO: \\
A Comprehensive Framework}
\author{Mahdi Barzegar Khalilsarai$^1$,  Tianyu Yang$^1$, Saeid Haghighatshoar$^2$, Xinping Yi$^3$, and Giuseppe Caire$^1$  
\thanks{$^1$ Communications and Information Theory Group (CommIT), Technische Universit\"{a}t Berlin (\{m.barzegarkhalilsarai, tianyu.yang,  caire\}@tu-berlin.de).}
\thanks{$^2$ Saeid Haghighatshoar is currently with the Swiss Center for Electronics and Microtechnology (CSEM), however his contribution to this work was made while he was with the CommIT group (saeid.haghighatshoar@csem.ch).}
\thanks{$^3$ Department of Electrical Engineering and Electronics, University of Liverpool  (xinping.yi@liverpool.ac.uk).}
}

\begin{document}
\maketitle
\vspace{-7mm}
\begin{abstract}
We propose a comprehensive scheme for realizing a massive multiple-input multiple-output (MIMO) system with dual-polarized antennas in frequency division duplexing (FDD) mode. Employing dual-polarized elements in a massive MIMO array has been common practice recently and can, in principle, double the number of spatial degrees of freedom with a less-than-proportional increase in array size. However, processing a dual-polarized channel is demanding due to the high channel dimension and the lack of Uplink-Downlink (UL-DL) channel reciprocity in FDD mode. In particular, the difficulty arises in channel covariance acquisition for both UL and DL transmissions and in common training of DL channels in a multi-user setup. To overcome these challenges, we develop a unified framework consisting of three steps: (1) a covariance estimation method to efficiently estimate the UL covariance from noisy, orthogonal UL pilots; (2) a UL-DL covariance transformation method that obtains the DL covariance from the estimated UL covariance in the previous step; (3) a multi-user common DL channel training with limited DL pilot dimension method, which enables the BS to estimate effective user DL channels and use them for interference-free DL beamforming and data transmission. We provide extensive empirical results to prove the applicability and merits of our scheme.\footnote{The methodology and results of this work were partially published in two recent conference articles by the same authors \cite{khalilsarai2019structured,khalilsarai2020active}.}

\end{abstract}
\begin{keywords}
 Dual-polarized massive MIMO, FDD,  channel covariance estimation, UL-DL covariance transformation, active channel sparsification.
	\end{keywords}

\section{Introduction}

\subsection{Dual Polarized FDD massive MIMO}
Massive multiple-input multiple-output (MIMO) antenna systems promise high data rates as well as link reliability in prospective generations of wireless communication systems \cite{larsson2014massive,lu2014overview}. The characteristic property of these systems is the deployment of a large number ($M\gg 1$) of antennas at the base station (BS), resulting in substantial improvements in terms of beamforming and multiplexing gains, while also increasing the array size. Since most wireless networks are currently based on \textit{frequency division duplexing} (FDD), implementing a massive MIMO system in FDD mode is an appealing proposition. Besides, many network developers consider using dual-polarized (DP) antenna elements in the array, since it offers a doubling of the number of inputs with a less-than-proportional increase in array size \cite{degli2011analysis,xu2016dual}. The effect of adopting DP antennas at the array on performance metrics such as the multiplexing gain depends on the degree of co-polarization (co-pol) and cross-polarization (X-pol) between the two polarization states (namely, \textit{horizontal} (H) and \textit{vertical} (V) polarizations). While specular reflection components lead to a low degree of X-pol (hence an approximate decoupling of the polarizations), diffuse scattering results in relatively high X-pol \cite{degli2011analysis}. 
In order to study these effects, we assign a pair of (correlated) channel coefficients to each element of the array and introduce a statistical model to represent the co-pol and X-pol properties of a particular environment. 
Assuming Gaussian statistics, the channel is a $2M$-dimensional random vector that is statistically characterized by its mean and covariance. This doubling of dimension brings about a series of challenges in realizing a dual-polarized FDD massive MIMO system. In the follow-up to this section, we outline these challenges and explain our proposed treatment for tackling each. 
\subsection{Channel Covariance Estimation}
Channel covariance knowledge at the BS is crucial for a variety of tasks including minimum mean squared error (MMSE) channel estimation, user grouping and designing efficient DL precoders. During UL, each user transmits a number of orthogonal pilots to the BS. Ideally, these pilots are separated by the time-frequency channel coherence block, so that with each transmission an independent realization of the channel $\hul (i)\in \bC^{2M} ,\, i=1,\ldots,N$ is observed at the BS.  The BS in turn uses the set of observed channel samples to estimate the UL channel covariance. The simplest and most common estimator is the sample covariance $\wh{\Sigmam}_N = \frac{1}{N}\sum_{i=1}^{N} \hul (i)\hul (i)^\herm,$ which is an unbiased estimator of the true covariance $\Sigmam = \bE [\xv \xv^\herm ]$. It is well-known that in scenarios in which the number of samples ($N$) is in the order of signal dimension ($N=\Oc (2M)$ where $\Oc$ denotes the Big O notation), the sample covariance estimator can be substantially improved by exploiting covariance structure. This is precisely the case when we study DP massive MIMO channels, in which the channel dimension is high ($2M\gg 1$) and the number of samples is restricted by the number of available time-frequency pilot resources.

The idea of exploiting structure for the purpose of covariance estimation is not new. Recent interest in low-rank and sparse covariance models has given rise to a broad range of such methods. The common denominator of these estimators is to form an optimization problem with the covariance estimate as its variable, in which a suitable cost, corresponding to the structure is minimized. For example, methods based on rank minimization (for low-rank covariances), and $\ell_0$-pseudo-norm minimization (for sparse covariances) or a combination thereof are proposed \cite{kang2014rank,oymak2015simultaneously}. Alternatively, one may consider convex relaxations of the costs above, replacing the matrix rank with its nuclear norm and the $\ell_0$-pseudo-norm with $\ell_1$-norm \cite{luo2011recovering}. Several interesting variations of this idea exist but going into further details is out of the scope of this work (see, for example, \cite{ravikumar2011high}).

In order to exploit structure in the problem at hand, we will show that the DP channel covariance follows a Kronecker-type form, and is given by an integral transform involving a positive semidefinite matrix-valued function of the angle of arrival (AoA). This function, coined as the \textit{dual-polarized angular spread function} (DP-ASF), represents the channel angular power density in H and V polarizations as well as the cross-correlation between the two.  Our approach to covariance estimation is based on a parametric representation of the DP-ASF in terms of a linear combination of elementary, limited-support density functions, whose coefficients are estimated given independent DP channel samples $\{ \hul (i) \}_{i=1}^N$. This parametric model is general, in that, it incorporates specular as well as diffuse angular scattering and does not assume unverified polarization properties. The estimation is carried out via a convex program, which enforces the positive semidefinite property on the solution. After estimating the DP-ASF, an estimate of the covariance is readily given by a simple integral transform. 
\subsection{Uplink-Downlink Channel Covariance Transformation}
In addition to the UL covariance, the BS needs to obtain an estimate of the DL covariance for all users both to obtain a reliable estimate of user DL channels and to design a DL precoder for multi-user beamforming. In a \textit{time division duplexing} (TDD) system, the covariance is the same for Uplink (UL) and Downlink (DL) channels due to channel reciprocity \cite{marzetta2006much}. However, in an FDD system, UL and DL covariances are different and therefore the DL covariance has to be estimated. In addition, the overhead of transmitting DL pilots, receiving feedback from the users and then estimating the DL covariance is too large and therefore this is not a feasible process. In order to estimate the DL covariance, we propose a UL-DL covariance transformation method. This method hinges upon a phenomenon known as \textit{angular channel reciprocity}: the angular power density as seen from the array is the same for UL and DL, resulting in the DP-ASF to be identical during UL and DL. The concept of angular channel reciprocity is well-established in the literature (e.g. \cite{khalilsarai2018fdd,miretti2018fdd}) and is exploited for processing the single-polarized array. Having an estimate of the DP-ASF from the previous step, we use angular reciprocity to obtain an estimate of the DL covariance. The relation between the DP-ASF and the DL covariance is similar to that of the DP-ASF and the UL covariance, with a change of the array response vector due to the change of frequency band. 
\begin{remark}
	We emphasize that, we exploit UL-DL angular reciprocity to estimate the DL covariance, which is then used to design the sparsifying precoder, and allow the estimation of instantaneous DL channels via common DL training and channel state feedback with limited pilot dimension. In contrast, some works in the literature have proposed to use UL pilots to directly estimate the instantaneous DL channels. Such an approach is reasonable only in a highly optimistic case in which the channel coefficient per antenna is seen as a superposition of signals coming from discrete, separable paths, whose AoA and complex coefficients can be estimated via super-resolution or compressive sensing methods.  These techniques fail in the presence of diffuse scattering components, where signal paths are not separable and ``extrapolating" the instantaneous UL channel to the instantaneous DL channel results in an MMSE proportional to the amount of power coming from diffuse scattering \cite{khalilsarai2017compressive}.\hfill  $\lozenge$
\end{remark}

\subsection{Downlink Channel Training and Precoding via Active Sparsification}
In order to achieve the gains of massive MIMO, it is necessary for the BS to estimate (train) instantaneous user DL channels and perform interference-free DL beamforming. While channel training is an easy task with small MIMO arrays, it becomes increasingly challenging with the increase in the number of antennas. This is especially an issue in FDD mode, where, unlike the TDD mode, instantaneous channel reciprocity does not hold, and UL and DL channels corresponding to different frequency bands are virtually uncorrelated (and therefore statistically independent, due to Gaussianity) random vectors, whose statistics is related by the UL-DL covariance relationship explained earlier. Therefore, the BS has to probe the channel in the DL by broadcasting pilot symbols, receive feedback from the users and finally estimate the DL channel. In order to estimate a $(2M\gg 1)$-dimensional DP channel with any conventional method and without structural assumptions (such as channel sparsity), the BS needs to transmit at least $2M$ pilot symbols and receive their feedback in the UL. On the other hand, the time-frequency resources of a single coherence block are used for both channel training and data transmission. Dedicating a number $T_{dl}$ of a total of $T$ coherence block dimensions to DL training introduces a \textit{pre-log factor} of $\max \{0,1-T_{dl}/T  \}$ in the sum-rate. Conventional channel estimation requires $T_{dl}\ge 2M$, while the dimension $T$ may in fact be less than $2M$. For example, in a standard LTE setup the users are scheduled over resource blocks containing $14$ OFDM symbols and $12$ subcarriers, making a total of $T = 14\times 12=168$ dimensions \cite{sesia2011lte}. With a DP array of, say, $M=100$ antennas, the number of coefficients to be estimated amounts to $2M = 200>T$, which clearly exhausts all the resources and results in zero sum-rate. This problem is not solved even by resorting to the channel sparsity assumption and various compressed sensing (CS) techniques (see e.g. \cite{rao2014distributed} and \cite{ding2018dictionary}). First, the channel sparsity postulate may not be always verified: sources such as \cite{bjornson2016massive} call it the \textit{sparsity hypothesis}. Therefore, CS techniques are always at the mercy of environmental properties, as to whether the channel is indeed sparse or not. Second, even if the sparsity assumption holds, the number of measurements necessary for accurate sparse recovery might be still high, exceeding the available DL pilot dimension.

To resolve this issue, we adopt and extend the \textit{active channel sparsification} (ACS) approach first proposed by some of the authors in \cite{khalilsarai2018fdd} for single-polarized arrays and extended here to DP arrays. Given user DL covariances and for a given pilot dimension $T_{dl}$, the idea behind ACS is to design a sparsifying precoder that jointly reduces the number of significant angular components of all the user channels to less than $T_{dl}$, while at the same time trying to maximize the rank of the sparsified effective channel matrix. This enables, as it will be shown, stable recovery of the effective user channels and simultaneously maximizing the system multiplexing gain, which is proportional to the channel matrix rank. Using the ACS method, we are not at the mercy of channel's sparsity features and we do not make any assumptions thereof. ACS is deployed via first identifying a set of common virtual beams among all the users for channel representation and forming a user-virtual beam bipartite graph. Then we prove a result, relating the channel matrix rank to the maximal matching size in the graph. Finally, the sparsifying precoder is realized by selecting a subset of users and virtual beams as the solution to a \textit{mixed integer linear program} (MILP) which can be solved via standard methods for practical channel dimensions. 

\subsection{Organization} 
The paper is organized as follows. In Section \ref{sec:model} we introduce the dual-polarized channel model. In Section \ref{sec:cov_estimation} we develop our channel covariance estimation method. Section \ref{sec:transformation} discusses UL-DL covariance transformation. In Section \ref{sec:common_training} we introduce the user-virtual beam bipartite graph and explain the ACS method. Various empirical results in Section \ref{sec:simulations} conclude the paper. 

\section{Channel Model}\label{sec:model}
We consider a \textit{uniform linear array} (ULA) of $M$ dual-polarized antenna elements that communicates with a single-antenna, single-polarized user. The channel between antenna $m$ of the array and the user antenna consists of two elements, corresponding to horizontal (H) and vertical (V) polarization coefficients, respectively denoted as $\hmH^{ul},\, \hmV^{ul}\in \bC$ for the Uplink (UL) channel. The channel gain for either polarization is a superposition of random gains along a continuum of AoAs, weighted by the \textit{antenna element response} which for antenna $m$ is given by  $a_m = e^{j\pi m\frac{2 d \sin (\theta)}{\lambda_{ul}}} = e^{j\pi m\frac{ \sin (\theta)}{\sin (\theta_{\max})}}$, where $d$ is the antenna spacing, $\theta\in [-\theta_{\max},\theta_{\max}]$ is the AoA, $\theta_{\max}$ is the maximum array angular aperture and $\lambda_{ul}$ is the wave-length of the electromagnetic wave over the UL frequency band. Taking the antenna spacing to be $d=\frac{\lambda_{ul}}{2\sin \theta_{\max}}$ and with the change of variables $\xi=\frac{\sin \theta}{\sin \theta_{\max}}\in [-1,1]$, the antenna element response admits the simpler form $a_m = e^{jm\pi \xi},~m=0,\ldots,M-1,$ with $\xi$ denoting the ``normalized" AoA parameter. Then, one can express H and V channel coefficients as
\begin{equation}\label{eq:h_expressions}
\begin{aligned}
\hmH^{ul} & = \int_{-1}^1 \WH (\xi) e^{j\pi m \xi} d\xi,  \quad
\hmV^{ul}  = \int_{-1}^1 \WV (\xi) e^{j\pi m\xi} d\xi  \,
\end{aligned}
\end{equation} 
where $\WH$ and $\WV$ are random processes representing the random gains along each AoA for H and V polarizations, respectively. We assume both of these to be zero-mean, circularly symmetric, complex Gaussian processes with the following autocorrelations:
\begin{equation}\label{eq:autocorr}
\begin{aligned}
\bE \left[ \WH (\xi) \WH^\ast (\xi') \right] = \gammaH (\xi) \delta (\xi -\xi'),\quad
\bE \left[ \WV (\xi) \WV^\ast  (\xi') \right] = \gammaV (\xi) \delta (\xi -\xi'),
\end{aligned}
\end{equation}
where we have adopted the \textit{wide-sense stationary uncorrelated scattering} (WSSUS) model, which assumes stationary second-order channel statistics (over reasonably short time intervals) and uncorrelated angular scattering gains \cite{proakis2001digital}. The functions\footnote{We use the term ``function" with some abuse of terminology. An accurate term would be ``distribution" in the sense of generalized functions, as studied in \cite{gelfand1964generalized}.} $\gammaH$ and $\gammaV$ are both real and non-negative, representing the channel power density received along each AoA for H and V polarizations, respectively. We call these horizontal and vertical \textit{angular spread functions} (ASFs) (see Fig. \ref{fig:DPASF}).
\begin{figure}[t]
	\centering
	\includegraphics[ width=0.5\textwidth]{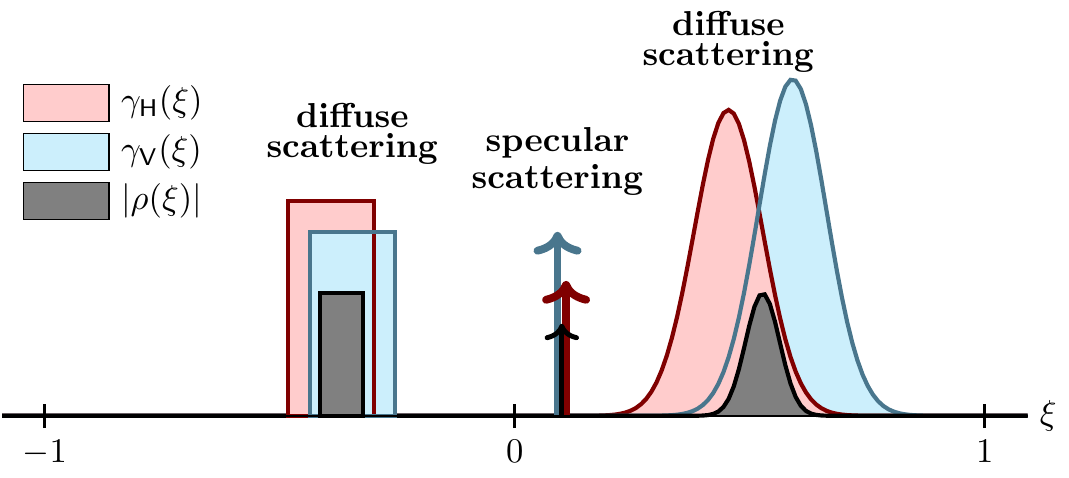}
	\caption{\small An example of H and V ASFs as well as the H-V cross-correlation modulus. The blue shaded function highlights $\gammaV (\xi)$, the red one highlights $\gammaH (\xi)$ and the black one highlights $|\rho (\xi)|$. }
	\label{fig:DPASF}
\end{figure} 
In practice, the H and V links can \textit{not} be entirely isolated from each other and therefore, there exists a leakage of channel power between the two. This implies that, for each AoA, the random gains $\WH (\xi)$ and $\WV (\xi)$ are correlated such that we have $\bE \left[ \WH (\xi) \WV^\ast  (\xi') \right] = \rho (\xi) \delta (\xi -\xi'),$
where $\rho$ is a generally complex-valued function. 

A dual-polarized channel can be more conveniently expressed as follows. Denote $M$-dimensional horizontal and vertical UL channel vectors 
$\hv_{\sf H}^{ul}  = \left[h_{0,{\sf H}},\ldots,h_{M-1,{\sf H}} \right]^\transp$, 
$\hv_{\sf V}^{ul}  = \left[h_{0,{\sf V}},\ldots,h_{M-1,{\sf V}} \right]^\transp$ and define the dual-polarized channel via the $2M$-dimensional vector $\hul = [ \hv_{\sf H}^{ul\, \transp},\allowbreak \, \hv_{\sf V}^{ul\, \transp} ]^\transp $. Using \eqref{eq:h_expressions} we have
$\hv_{\sf H}^{ul} = \int_{-1}^1 W_{\sf H} (\xi) \av (\xi) d\xi$ and 
$\hv_{\sf V}^{ul} = \int_{-1}^1 W_{\sf V} (\xi) \av (\xi) d\xi,$
where $\av (\xi ) = [1,\,e^{j\pi \xi},\ldots,\allowbreak e^{j\pi (M-1)\xi}]^\transp \in \bC^{M}$ denotes the array response vector. Finally, the DP channel is given by
\begin{equation}\label{eq:h_dual}
\begin{aligned}
\hv_{ul}= \int_{-1}^1 \begin{bmatrix}
\av (\xi) & \mathbf{0}\\
\mathbf{0} & \av (\xi) 
\end{bmatrix}\, 
\begin{bmatrix}
W_{\sf H} (\xi)\\
W_{\sf V}  (\xi) 
\end{bmatrix} \, d\xi 
 = \int_{-1}^1 \left(\mathbf{I}_2 \otimes \av (\xi)\right) \wv (\xi) d\xi,
\end{aligned}
\end{equation}
where $\otimes $ denotes Kronecker product, and $\wv (\xi) := [W_{\sf H} (\xi),W_{\sf V} (\xi)]^\transp$. The channel covariance can be computed according to \eqref{eq:h_dual} as
\begin{equation}\label{eq:ch_cov}
\begin{aligned}
\scalebox{0.95}{$\Sigmam_{\hv}^{ul}  = \bE \left[ \hv_{ul} \hv_{ul}^\herm \right] =\int_{-1}^1 \int_{-1}^1 \left(\mathbf{I}_2 \otimes \av (\xi)\right)  \bE \left[ \wv (\xi)\wv (\xi')^\herm \right] \left(\mathbf{I}_2 \otimes \av (\xi')\right)^\herm d\xi d\xi'  = \int_{-1}^1 \Gammam (\xi) \otimes \Am_{ul} (\xi) d\xi$},
\end{aligned}
\end{equation} 
where we have defined the rank-1 matrix $\Am_{ul} (\xi) = \av (\xi) \av (\xi)^\herm$,
and the matrix-valued function
\begin{equation}\label{eq:DP_ASF}
\Gammam (\xi) = \bE \left[ \wv (\xi)\wv (\xi')^\herm \right]  = \begin{bmatrix}
\gammaH (\xi) & \rho (\xi) \\
\rho (\xi)^\ast &  \gammaV (\xi)
\end{bmatrix} \in \bC^{2\times 2},
\end{equation}
which is positive semidefinite (PSD) for all $\xi \in [-1,1]$. For convenience, we call $\Gammam (\xi) $ the \textit{dual-polarized angular spread function} (DP-ASF) and note that, similar to the role played by the ASF in a single-polarized array, the DP-ASF captures the angular spectral properties of the channel, i.e. the power density along H and V links and the power leakage density between the two. Note that since $\Gammam (\xi)$ is PSD, we have
$|\rho (\xi)|^2\le \gammaH (\xi) \gammaV (\xi)$,
for all $\xi \in [-1,1]$, putting a bound on the modulus of $\rho $. In particular, if for some $\xi$ we have $\gammaH (\xi)=0$ or $\gammaV (\xi)=0$, then necessarily $\rho (\xi)=0$, which shows that the support of $\rho $ is limited to the support of $\gammaV $ and $\gammaH$.

\section{Channel Covariance Estimation}\label{sec:cov_estimation}
Suppose that the BS receives $N$ noisy pilot measurements of the UL channel as
\begin{equation}\label{eq:noisy_pilots}
\yv_{ul} (i) = \hv_{ul} (i) \, x_n + \zv (i),~i=1,\ldots,N,
\end{equation}
where $x_n=\sqrt{P}$ is the pilot symbol, $\zv (i)\sim \cg (\mathbf{0},N_0\mathbf{I}_{2M})$ is the additive white Gaussian noise (AWGN) vector at the $i$-th transmission with $N_0$ being the noise variance per element, and $\hv_{ul} (i)$ is the $i$-th channel realization. With orthogonal pilot transmission over distinct time-frequency coherence blocks, we can safely assume that the channel realizations $\hv (i), \, \, n=1,\ldots,N$ are independent. A simple estimator of the UL channel covariance $\Sigmam_{\hv}^{ul}$ is given by the sample covariance matrix
\begin{equation}\label{eq:sample_cov}
\widehat{\Sigmam}_{\hv}^{ul} = \widehat{\Sigmam}_{\yv}^{ul}-N_0 \mathbf{I}_{2M}:=\frac{1}{N}\sum_{i=1}^{N} \yv_{ul} (i)\yv_{ul} (i)^\herm -N_0 \mathbf{I}_{2M},
\end{equation}
The sample covariance is a consistent estimator of the true covariance and converges to it for relatively large number of samples ($N\gg 2M$), obtaining which is affordable in the case of small MIMO channels. However, for a dual-polarized massive MIMO channel with $2M\gg 1$, this condition is hardly met and instead, the number of samples is in the order of the channel dimension ($N=\Oc (2M)$). In these regimes of dimensionality, it is well-known that one can considerably improve the sample covariance estimator, for example by exploiting the covariance structure. In particular, here we are interested in covariance matrices that belong to the set of feasible DP MIMO covariances of a ULA defined as
\begin{equation}
\Cc := \left\{ \int_{-1}^{1} \Phim (\xi)\otimes \Am (\xi) d\xi,~ \Phim : [-1,1]\to \bS_+^{2} \right\},
\end{equation}
where $\Phim (\xi)$ is a generic DP-ASF and $\bS_+^{2} $ denotes the set of $2\times 2$ PSD matrices. To incorporate this structure in an estimator, we introduce a parametric representation of the DP-ASF.
\begin{figure}
	\centering
	\begin{subfigure}{.5\textwidth}
		\centering
		\includegraphics[width=1\linewidth]{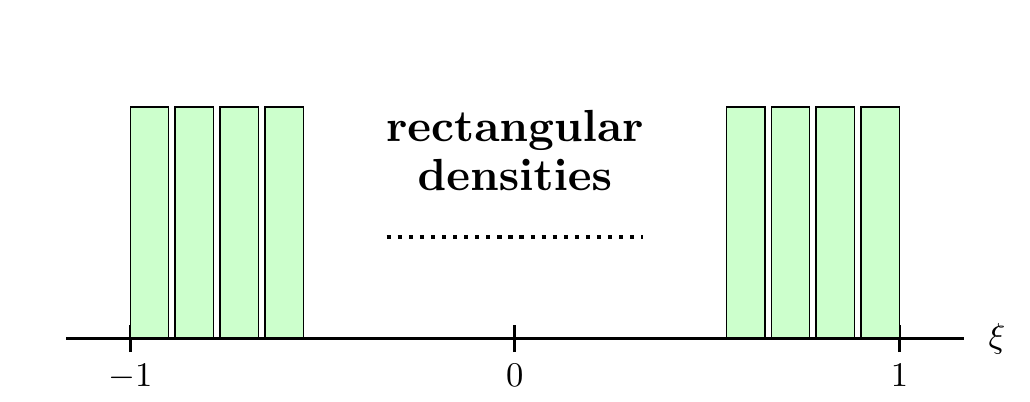}
				\caption{}
		\label{fig:rect_densities}
	\end{subfigure}%
	\begin{subfigure}{.5\textwidth}
		\centering
		\includegraphics[width=1\linewidth]{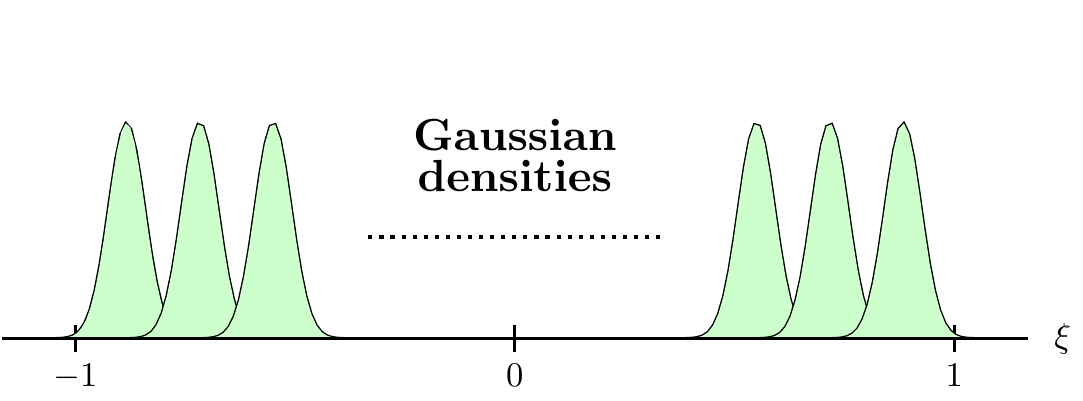}
				\caption{}
				\label{fig:gauss_densities}
	\end{subfigure}
	\caption{\small Examples of density function families: (a) rectangular densities and (b) Gaussian densities.}
	\label{fig:densities}
\end{figure}

\subsection{Parametric Representation of the DP-ASF}
The DP-ASF of a channel models the received power density over each AoA. This power density in turn depends on the scattering properties of the environment: partly it comes from line of sight (LoS) propagation, specular reflection and wedge diffraction in the environment, that occupy narrow angular intervals, while the rest of the power comes from diffuse scattering, occupying wide angular intervals \cite{degli2011analysis} (see Fig. \ref{fig:DPASF}). In order to distinguish between these two types of multipath effects, we decompose the DP-ASF into \textit{discrete} and \textit{continuous} components:
\begin{equation}\label{eq:decompose}
\Gammam (\xi) = \Gammam_d (\xi)+\Gammam_c (\xi),
\end{equation}
where $\Gammam_c (\xi)$ is the continuous component and $\Gammam_d (\xi)$ is the discrete component. For the discrete part, the parametric form is simply given by a train of weighted delta functions:
\begin{equation}\label{eq:discrete_Gamma}
\Gammam_d (\xi)  = \sum_{i=1}^{r} \Cm_i \delta (\xi - \xi_i)
\end{equation}
where $\Cm_i\succeq \mathbf{0},~i=1,\ldots,r$ are $2\times 2$ PSD matrices and $\xi_i,~i=1,\ldots,r$ are discrete AoAs. In contrast, we can not assume a parametric description of $\Gammam_c $ in terms of delta functions. Instead, we define a dictionary of $n$ density functions with small support $\clG_c:=\{\psi_i(\xi)\ge\, 0\, \forall\, \xi \in [-1,1] : i=1,\ldots,n\}$, using which we can approximate $\Gammam_c$ as
\begin{equation}\label{eq:Gamma_c_approx}
\Gammam_c (\xi) \approx  \sum_{i=1}^{n} \Cm_i' \psi_i(\xi), 
\end{equation}
where similar to \eqref{eq:discrete_Gamma} $\Cm_i',~i=1,\ldots,n$ are $2\times 2$ PSD matrices. If $\Gc_c$ is suitably chosen and is large enough ($n\gg 1$), then one can find the coefficients $\Cm_i'$ such that the approximation error in \eqref{eq:Gamma_c_approx} is negligible. Fig. illustrates the approximation of the continuous part of the ASF corresponding to the horizontal channel $[\Gammam (\xi)]_{1,1}=\gamma_{\sf H} (\xi)$ by rectangular densities.

Using \eqref{eq:ch_cov}, \eqref{eq:discrete_Gamma} and \eqref{eq:Gamma_c_approx}, we can derive a similar discrete-continuous decomposition for the channel covariance as
\begin{equation}\label{eq:cov_decomp_pol}
\begin{aligned}
\Sigmah^{ul} =  \Sigmah^{ul,d} + \Sigmah^{ul,c}  &=  \sum_{i=1}^{r}   \Cm_i \otimes  \Am_{ul} (\xi_i) + \int_{-1}^1 \Gamma_c (\xi)\otimes  \Am_{ul} (\xi) d \xi\\
& \approx \sum_{i=1}^{r}   \Cm_i \otimes  \Am_{ul} (\xi_i) + \sum_{i=1}^{n}   \Cm'_i \otimes  \Am'_{ul, i}, 
\end{aligned}
\end{equation}
where we have defined $\Am'_{ul,i} = \int_{-1}^1 \psi_i(\xi)  \Am_{ul} (\xi) d\xi ~\in\bC^{M\times M}$. If the discrete AoAs $\{ \xi_i \}_{i=1}^r$ were known, we could claim via Eq. \eqref{eq:cov_decomp_pol} that estimating $\Sigmam_{\hv}^{ul}$ is equivalent to estimating the coefficient matrices $\{ \Cm_i \}_{i=1}^r$ and $\{ \Cm_i' \}_{i=1}^n$. In order to make this strategy plausible, we first propose a method for estimating the discrete AoAs $\{ \xi_i \}_{i=1}^r$.

\begin{figure}[t]
	\centering
	\includegraphics[ width=0.5\textwidth]{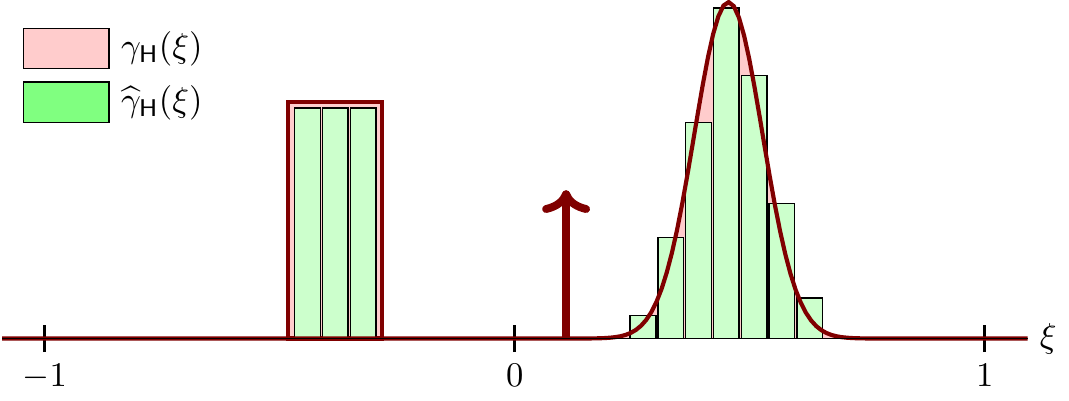}
	\caption{\small An example of approximating the continuous horizontal ASF with a dictionary of rectangular densities.}
	\label{fig:ASF_approx}
\end{figure} 
\subsection{Estimating Discrete AoAs} 
We propose a heuristic method for estimating discrete AoAs, based on the Multiple Signal Classification (MUSIC) algorithm, which is a well-known spectral estimation method \cite{stoica1989music}. Suppose we have an estimate of the number of discrete AoAs as $\widehat{r}$. This implies that the discrete covariance component $\Sigmah^{ul,d} = \sum_{i=1}^{\widehat{r}}   \Cm_i \otimes  \Am_{ul} (\xi_i) $ is of maximum rank $2\widehat{r}$. Define the eigen-decomposition of $\Sigmah^{ul}$ as $\Sigmah^{ul} = \Um \Dm \Um^\herm,$
where $\Um \in \bC^{2M\times 2M}$ is a unitary matrix and $\Dm \in \bR_+^{2M\times 2M}$ is diagonal with real, non-negative elements. We call the space spanned by the set of $2\widehat{r}$ dominant eigenvectors of $\Sigmah^{ul}$ by ``signal subspace", and the space spanned by the remaining $2M-2\widehat{r}$ eigenvectors as ``noise subspace". Assume the diagonal elements of $\Dm$ to be ordered as $d_1\ge d_2\ge \ldots\ge d_{2M}$ and denote their associated eigenvectors as $\uv_1,\uv_2,\ldots,\uv_{2M}$. The vectors spanning the noise subspace are collected in the matrix $\bfU_\text{noi}=[\uv_{2\widehat{r}+1}, \dots, \uv_{2M}]$. Then we form the \textit{pseudo-spectrum function} $\eta (\xi) = \left\Vert  \bfU_\text{noi}^\herm\, \left(\mathbf{I}_2 \otimes \av (\xi) \right) \right\Vert_{\sf F}^2$ and estimate the discrete AoAs as the $\widehat{r}$ minimizers of $\eta (\xi) $ with the smallest minimum value. Intuitively, in this way we find a number of $\widehat{r}$ AoAs that the $2M\times 2$ dual-polarized array response $\mathbf{I}_2 \otimes \av (\xi)$ along them, has the smallest norm when projected to the noise subspace. This heuristic follows the same rationale as the classical MUSIC method, in which the $M$-dimensional array response vector of an $M$-dimensional ULA is projected to the noise subspace and measured in terms of the $\ell_2$-norm to compute the pseudo-spectrum function. After finding the minima of $\eta (\xi)$, we identify its $\widehat{r}$ smallest minima as the estimated discrete AoAs and denote them as $\widehat{\xi}_i,\, i=1,\ldots,\widehat{r}$. Recalling \eqref{eq:cov_decomp_pol}, now we can say that estimating $\Sigmam_{\hv}^{ul}$ is equivalent to estimating the $n+\wh{r}$ coefficient parameters, namely $\{ \Cm_i \}_{i=1}^{\widehat{r}}$ and $\{ \Cm_i' \}_{i=1}^n$.

\begin{remark}
The number of spikes is typically a few and can be learned over time. Also, overestimating the number of spikes is better than underestimating it: if \textit{fake} spikes (i.e., false positives) appear in the set of estimated discrete AoAs, they will be eventually associated with small coefficients in the next coefficient estimation step. However, if a true spike is not detected, then we may not get an accurate covariance estimate as no term in the parametric expansion \eqref{eq:cov_decomp_pol} will compensate for the contribution of the missing spike. Therefore, we use a large-enough pre-defined value for $\widehat{r}$.\hfill  $\lozenge$
\end{remark}
\subsection{Estimating DP-ASF Coefficients}
Let us first reformulate the channel covariance parametric description in a simpler form. Define the known $M\times M$ matrices $\Sm_i = \Am_{ul} (\widehat{\xi}_i)$ for $i=1,\ldots,\widehat{r}$ and $\Sm_i = \Am'_{ul,i-\widehat{r}} $ for $i=\widehat{r}+1,\ldots,\widehat{r}+n$. Also define their associated unknown coefficients as $\Wm_i = \Cm_i$ for $i=1,\ldots,\widehat{r}$ and $\Wm_{\wh{r}+i} = \Cm'_{i} $ for $i=1,\ldots,n$. Then \eqref{eq:cov_decomp_pol} can be reformulated as 
\begin{equation}\label{eq:cov_parametric}
\Sigmam_{\hv}^{ul} (\{ \Wm_i \}_{i=1}^{\wh{r}+n})\approx \sum_{i=1}^{\widehat{r}+n} \Wm_i \otimes \Sm_i.
\end{equation}
Now, the problem is to estimate the coefficient matrices $\{\Wm_i \in \bS_+^2\}_{i=1}^{n+\wh{r}}$, given noisy pilot measurements $\{\yv_{ul} (j)\}_{j=1}^N$ in \eqref{eq:noisy_pilots}. Our proposition for performing this task is based on minimizing the the difference between the channel sample covariance matrix and its parametric form as a function of the coefficients. We perform the minimization by constraining the coefficients to be PSD. 
Formally, we have the following optimization problem:
\begin{equation}\label{eq:PSDLS}
\begin{aligned}
\{ \widehat{\Wm}_{i} \}_{i=1}^{\widehat{r}+n}\, =\, & \underset{\{ \Wm_i \}_{i=1}^{\widehat{r}+n}}{\arg\min} && \Vert \widehat{\Sigmam}_{\hv}^{ul} - \sum_{i=1}^{\widehat{r}+n} \Wm_i\otimes \Sm_i \Vert_{\sf F}^2 \\
& \text{subject to} && \Wm_i\succeq \mathbf{0},~i=1,\ldots,\widehat{r}+n.
\end{aligned} 
\end{equation}
We call this problem a \textit{positive semi-definite least-squares} (PSD-LS) program. The PSD-LS is convex and can be solved using standard algorithms (SDP solvers). Then we obtain the covariance estimate simply by using \eqref{eq:cov_parametric} and replacing $\Wm_i$ with $\wh{\Wm}_{i}$. Note that solving \eqref{eq:PSDLS} also provides an estimate of the DP-ASF using \eqref{eq:discrete_Gamma} and \eqref{eq:Gamma_c_approx} as
\begin{equation}\label{eq:DPASF_est}
\widehat{\Gammam}(\xi) = \sum_{i=1}^{\wh{r}} \wh{\Wm}_{i}\, \delta (\xi -\wh{\xi}_i) + \sum_{i=1}^{n} \wh{\Wm}_{\wh{r}+i}\,  \psi_i(\xi).
\end{equation}
\begin{figure*}[t]
	\centering
	\includegraphics[scale=0.6]{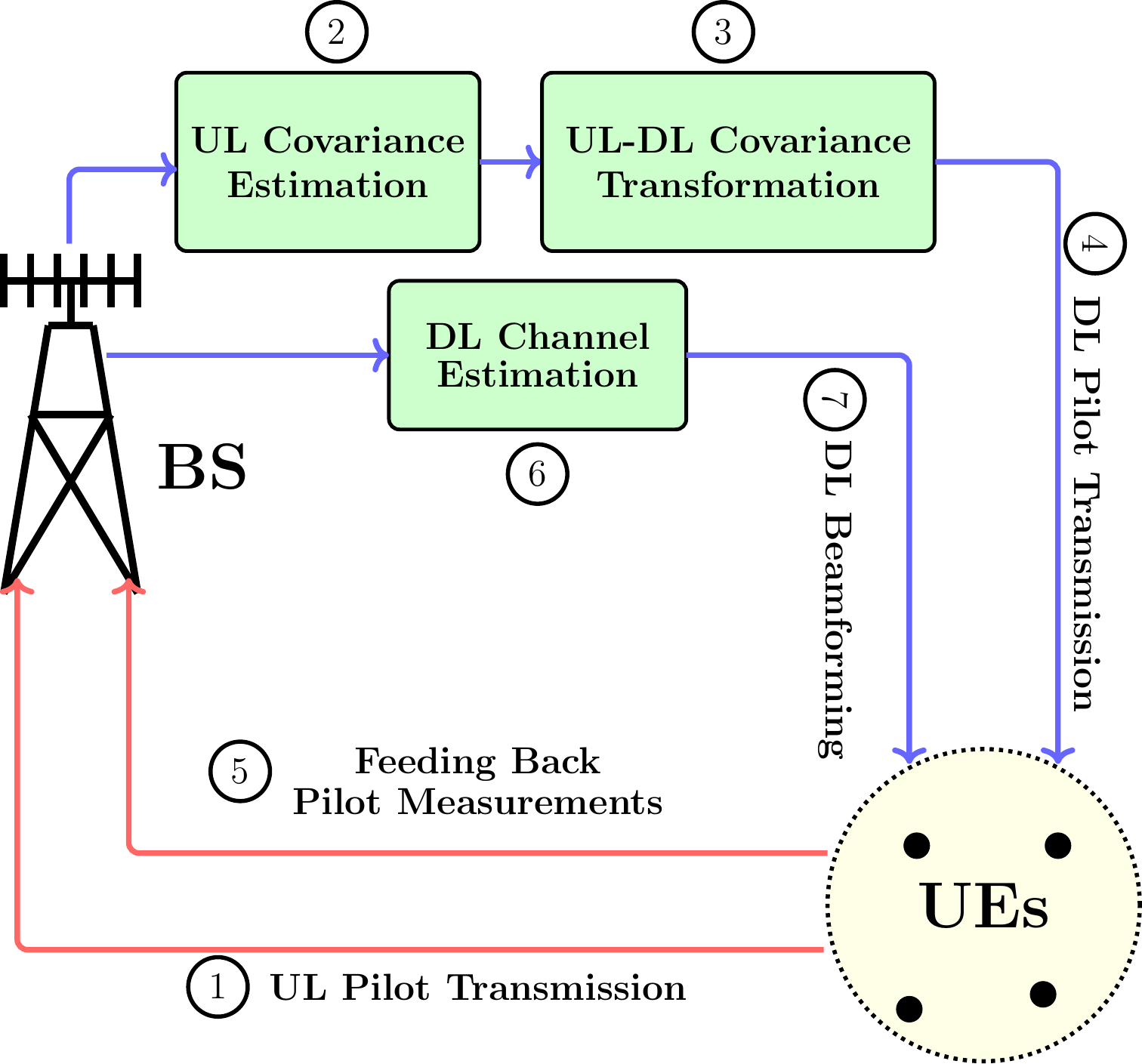}
	\caption{Overall diagram of our scheme.}
	\label{fig:block_diagram}
\end{figure*} 
\section{UL-DL Covariance Transformation}\label{sec:transformation}
Estimating DL channel covariance is necessary for MMSE DL channel estimation and multi-user common DL channel training. Once UL channel covariance is estimated, estimating the DL channel covariance in TDD mode is straightforward, as due to channel reciprocity, UL and DL covariances are identical. However, in FDD mode the covariance varies from UL to DL due to a change of frequency band, resulting in a different response by array elements. 

A useful property of estimating the ASF in parametric form, as we did in the previous section, is that using it we can estimate the DL covariance with a ``change of basis". Similar to the UL channels, the H and V channels in the DL can be represented as
\begin{equation}\label{eq:h_polar_dl}
\begin{aligned}
\hv_{H}^{dl} = \int_{-1}^1 W_{ H} (\xi) \bv (\xi) d\xi, \quad 
\hv_{V}^{dl} = \int_{-1}^1 W_{V} (\xi) \bv (\xi) d\xi,
\end{aligned}
\end{equation}
where $\bv (\xi)$ is the DL array response vector. Assuming as before the antenna spacing $d=\frac{\lambda_{ul}}{2}$ we have
$ \bv (\xi) = [1,e^{j\pi \nu \xi},\ldots,e^{j\pi (M-1)\nu \xi}]^\transp,$ where $\nu=\frac{\lambda_{ul}}{\lambda_{dl}}=\frac{f_{dl}}{f_{ul}} $ is the DL to UL carrier frequency ratio. With the same reasoning leading to \eqref{eq:ch_cov}, we express the DL covariance as
\begin{equation}\label{eq:dl_cov}
\Sigmam_{\hv}^{dl} = \int_{-1}^{1} \Gammam (\xi) \otimes \Am_{dl} (\xi) \, d\xi,
\end{equation} 
where $\Am_{dl} (\xi) = \bv (\xi) \bv(\xi)^\herm$. From the estimate of $\Gammam (\xi)$ in \eqref{eq:DPASF_est} we estimate $\Sigmam_{\hv}^{dl} $ as
\begin{equation}\label{eq:dl_cov_estimate}
\widehat{\Sigmam}_{\hv}^{dl} = \int_{-1}^1 \widehat{\Gammam}(\xi)\otimes  \Am_{dl} (\xi)d\xi =  \sum_{i=1}^{n+\widehat{r}}  \widehat{\Wm}_{i} \otimes \Sm_i',
\end{equation}
where $\Sm_i' = \int_{-1}^{1} \psi_i (\xi) \Am_{dl} (\xi)d\xi$ for $i=1,\ldots, n$ and $\Sm_i'=\Am_{dl} (\widehat{\xi}_i) $ for $i=n+1,\ldots,n+\widehat{r}$. 

To summarize, we have so far developed a method for estimating DL channel covariance from UL pilots for every user. The necessity of DL covariance acquisition becomes clear in the next section.
 
\section{Downlink Channel Training and Multi-User Precoding}\label{sec:common_training}
Besides the problem of covariance estimation, the BS is required to transmit multiplexed data to several users in the DL. An interference-free transmission is possible only if the BS has the instantaneous DL channel state information (CSI) for all users to construct a beamformer. Since channel reciprocity does not hold in FDD mode, the instantaneous DL CSI is obtained via common DL training (pilot transmission) of the user channels and feeding back the measurements to the BS during UL. The challenge is that, for a dual-polarized massive MIMO system with a channel dimension of $2M\gg 1$, the number of pilots used for DL training must be large so that channel estimation is feasible. This results in a substantial reduction of DL sum-rate. Also feeding back a large number of measurements to the BS consumes a considerable part of UL resources and may result in large delays.

In order to overcome this dimensionality bottleneck, recently we proposed the \textit{active channel sparsification} (ACS) method \cite{khalilsarai2018fdd}, which enables stable channel estimation for any given pilot dimension that is specified by the standard. In particular, ACS aims at designing a linear precoder that is concatenated with the physical channel. The design of the precoder depends only on user DL covariances, and obviously not on the instantaneous channel realizations as, in fact, they should be estimated via UL closed-loop feedback. This completes our overall proposed scheme for implementing a dual-polarized FDD massive MIMO system, as illustrated in the block-diagram of Fig. \ref{fig:block_diagram}.

We can formalize the idea behind ACS as follows. To jointly train the DL channels, the BS transmits a pilot matrix $\Psim$ of dimension $\Tdl \times M'$, where $\Tdl\le T$ is a fixed pilot dimension such that each row $\Psim_{i,.}$ represents a pilot signal that is transmitted from the $M' \leq 2M$ inputs of a precoding matrix $\Bm$ of dimension  $M' \times 2M$. The integer $M'$ is a suitable intermediate dimension that, as we will see later, is determined during the precoder design. The observed training symbols at user $k$ can be expressed via the $\Tdl$-dimensional vector
	\begin{equation}\label{eq:cs_eq_1}
{\bfy}_{dl,k}= {\bf\Psi} \bfB \hdlk + {\bfz}_{k} = {\bf\Psi} \widetilde{\hv}_{dl,k} + {\bfz}_k,
\end{equation}
where $\hv_{dl,k}$ is the DL channel vector of user $k$ for $k=1,\ldots,K$, ${\bfz}_k\sim \cg (\mathbf{0},N_0 \mathbf{I}_{\Tdl} )$ is the AWGN, and pilot and precoding matrices are normalized such that $\trace ( \Psim \Bm \Bm^\herm \Psim^\herm ) = \Tdl \Pdl, $ where $\Pdl$ is the BS transmit power resulting in the DL signal-to-noise ratio (SNR) to be equal to $\text{SNR} =\frac{\Pdl}{N_0}$. 

In \eqref{eq:cs_eq_1} we have also defined the \textit{effective channel vector} $\widetilde{\hv}_{dl,k} := \Bm \hv_{dl,k}$ as the concatenation of the precoder with the true channel. In the ACS method, our intention is to design $\Bm$ as a {\em sparsifying} precoder, such that each user effective channel vector $\widetilde{\hv}_{dl,k} $ is sufficiently sparse (over the angular domain) and yet the collection of the effective channels for $k = 1,\ldots,K$ forms an effective channel matrix with a rank that is as large as possible. In this way, each effective channel can be estimated using the fixed (possibly even small) pilot overhead $\Tdl$, but the BS is still able to transmit multiple data streams in the DL.

\subsection{Necessity of Channel
Sparsification}\label{sec:nec_acs} 
The channel vector of user $k$ admits the Karhunen-Lo{\`e}ve (KL) expansion $\hv_{dl,k} = \sum_{m=1}^{2M} g_{k,m}\allowbreak \sqrt{\lambda_{k,m}} \,\uv_m^{(k)}$, where $g_{k,m}\sim \cg (0,1) $ are i.i.d. complex Gaussian variables, $\uv_m^{(k)}$ is the $m$-th eigenvector of user $k$ DL channel covariance and $\lambda_{k,m}$ is its associated eigenvalue. Define the vector of eigenvalues of user $k$ as $\lambdav_k\in \bR_+^{2M}$ and define the support of $\lambdav_k$ as $\Sc_k = \{ m : \lambda_{k,m} \neq 0 \}$ with a size $s_k = |\Sc_k|$, which specifies the covariance rank. The following lemma yields necessary and sufficient conditions for the stable estimation of $\hv_{dl,k}$, where by estimation stability we mean that the estimation error vanishes as the noise variance tends to zero. 
\begin{lemma}\label{lem:stable_rec}
Consider the sparse Gaussian vector $\hdlk$ with support set $\Sc_k$. 
Let $\widehat{\hv}_{dl,k}$ denote any estimator for $\hdlk$ 
based on the observation
$\yv_{dl,k} = {\bf\Psi} \hdlk + {\bfz}_k$ (note that this coincides with 
(\ref{eq:cs_eq_1}) by replacing $\Bm=\mathbf{I}_{2M}$, i.e., without the sparsifying precoder). Let $\Rm_e = \bE[ (\hdlk - \widehat{\hv}_{dl,k}) (\hdlk - \widehat{\hv}_{dl,k})^\herm ]$ denote the corresponding estimation error covariance matrix.  
If  $\Tdl \ge s_k$ there exist pilot matrices $\Psim \in \bC^{\Tdl \times 2M}$ for which $\lim_{\Nvar \downarrow 0} \trace(\Rm_e) = 0$ for all support 
sets $\Sc_k : |\Sc_k| = s_k$. 
Conversely,  for any support set $\Sc_k : |\Sc_k| = s_k$ any pilot matrix $\Psim \in \bC^{\Tdl \times 2M}$ 
with  $\Tdl < s_k$ yields $\lim_{\Nvar \downarrow 0} \trace(\Rm_e) > 0$.
\hfill $\square$
\end{lemma}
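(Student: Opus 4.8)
The plan is to compute the smallest attainable error covariance, namely that of the MMSE (conditional-mean) estimator, which dominates $\Rm_e$ in the positive-semidefinite order for every estimator, and to track its trace as $\Nvar\downarrow 0$. First I would diagonalize the problem in the eigenbasis of the covariance of $\hdlk$. Writing $\Sigmam_{\hv,k}^{dl}=\Um_{\Sc_k}\Lambdam_{\Sc_k}\Um_{\Sc_k}^\herm$, where $\Um_{\Sc_k}\in\bC^{2M\times s_k}$ collects the eigenvectors indexed by $\Sc_k$ and $\Lambdam_{\Sc_k}\succ\mathbf{0}$ is the $s_k\times s_k$ diagonal matrix of the corresponding (strictly positive) eigenvalues, the KL expansion gives $\hdlk=\Um_{\Sc_k}\tilde{\gv}$ with $\tilde{\gv}\sim\cg(\mathbf{0},\Lambdam_{\Sc_k})$, and the observation becomes $\yv_{dl,k}=\bar{\Psim}\,\tilde{\gv}+\zv_k$ with $\bar{\Psim}:=\Psim\Um_{\Sc_k}\in\bC^{\Tdl\times s_k}$. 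By the standard Gaussian MMSE formula together with the Woodbury identity, the minimal error covariance for estimating $\tilde{\gv}$ is $(\Lambdam_{\Sc_k}^{-1}+\tfrac{1}{\Nvar}\bar{\Psim}^\herm\bar{\Psim})^{-1}$, and since the optimal $\widehat{\hv}_{dl,k}=\Um_{\Sc_k}\widehat{\tilde{\gv}}$ and $\Um_{\Sc_k}$ has orthonormal columns, at optimality $\trace(\Rm_e)=\trace\big((\Lambdam_{\Sc_k}^{-1}+\tfrac{1}{\Nvar}\bar{\Psim}^\herm\bar{\Psim})^{-1}\big)$, with $\trace(\Rm_e)$ at least this value for any estimator.

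For the achievability direction ($\Tdl\ge s_k$), the key observation is that $\bar{\Psim}$ has full column rank $s_k$ exactly when the $s_k$ columns of $\Psim\Um$ indexed by $\Sc_k$ are linearly independent, $\Um$ being the full $2M\times 2M$ unitary eigenvector matrix. Hence I would set $\Psim=\Am\Um^\herm$, where $\Am\in\bC^{\Tdl\times 2M}$ is a Vandermonde matrix built from $2M$ distinct nodes: any $\Tdl$ columns of such an $\Am$ are linearly independent, hence so are any $s_k\le\Tdl$ of them, so $\bar{\Psim}^\herm\bar{\Psim}\succ\mathbf{0}$ for \emph{every} support set of size $s_k$. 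Setting $\mu:=\min_{|\Sc|=s_k}\lambda_{\min}(\bar{\Psim}^\herm\bar{\Psim})>0$, a finite minimum over finitely many subsets, yields $\Lambdam_{\Sc_k}^{-1}+\tfrac{1}{\Nvar}\bar{\Psim}^\herm\bar{\Psim}\succeq\tfrac{\mu}{\Nvar}\mathbf{I}$ and therefore $\trace(\Rm_e)\le s_k\Nvar/\mu\to 0$, uniformly over $\Sc_k$. I expect the only genuine obstacle here to be precisely this ``for all support sets simultaneously'' clause: it upgrades the requirement from full rank on one fixed subspace to a spark/Kruskal-rank condition on $\Psim\Um$, which is exactly why a Vandermonde (or generic) choice is needed and why one must extract the single uniform constant $\mu>0$.

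For the converse ($\Tdl<s_k$), observe that for every $\Psim$ the matrix $\bar{\Psim}=\Psim\Um_{\Sc_k}$ has rank at most $\Tdl<s_k$, so there is a unit vector $\vv\in\bC^{s_k}$ with $\bar{\Psim}\vv=\mathbf{0}$. Then $\vv^\herm\big(\Lambdam_{\Sc_k}^{-1}+\tfrac{1}{\Nvar}\bar{\Psim}^\herm\bar{\Psim}\big)\vv=\vv^\herm\Lambdam_{\Sc_k}^{-1}\vv\le 1/\lambda_{\min}(\Lambdam_{\Sc_k})$ independently of $\Nvar$, so by Courant--Fischer the largest eigenvalue of $(\Lambdam_{\Sc_k}^{-1}+\tfrac{1}{\Nvar}\bar{\Psim}^\herm\bar{\Psim})^{-1}$ is at least $\lambda_{\min}(\Lambdam_{\Sc_k})>0$; hence $\trace(\Rm_e)\ge\lambda_{\min}(\Lambdam_{\Sc_k})>0$ for every $\Nvar>0$ and every estimator, so $\lim_{\Nvar\downarrow 0}\trace(\Rm_e)>0$ (the MMSE trace being monotone decreasing in $\Nvar$, the limit exists; for a general estimator one replaces $\lim$ by $\liminf$). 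Combining the two directions completes the proof.
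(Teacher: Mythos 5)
Your proof is correct and self-contained, and it follows essentially the same route as the proof the paper defers to (Lemma 1 of the cited companion work): lower-bound $\trace(\Rm_e)$ by the MMSE error covariance written in the eigenbasis of the channel covariance, get achievability from a pilot matrix whose restriction to \emph{every} size-$s_k$ support has full column rank (your Vandermonde/spark construction with the uniform constant $\mu$), and get the converse from the rank deficiency of $\Psim \Um_{\Sc_k}$ when $\Tdl < s_k$. One wording slip only: the MMSE error covariance is \emph{dominated by} (not dominating) $\Rm_e$ in the positive-semidefinite order for every estimator, which is the direction you actually use later in the argument.
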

\begin{proof}
See the proof of Lemma 1 in \cite{khalilsarai2018fdd}.
\end{proof}

Lemma \ref{lem:stable_rec} asserts the following important implication. First, note that stable channel estimation is necessary in order to achieve high spectral efficiency in the high-SNR regime. In fact, if the estimation mean-squared error (MSE) of the user channels does not vanish as $\Nvar \downarrow 0$, the system self-interference due to imperfect channel knowledge grows proportionally to the signal power and we have an interference-limited multi-user system, which is undesirable. On the other hand, if $\Tdl < s_k$ for some user $k$, then  any scheme that relies on channel sparsity will fail to yield a stable channel estimate. This includes, for example, the sophisticated compressed sensing (CS) methods, which simply can not stably estimate a $s_k$-sparse channel from $\Tdl < s_k$ measurements. Therefore, one constraint for designing the sparsifying precoder $\Bm$, is that once it is applied to the channel vector, the sparsity of the resulting effective channel is less than or equal to the available pilot dimension $\Tdl$. 

\subsection{Virtual Beam Representation}
From the discussion above, it seems to be necessary that all the channel vectors have a sparse representation over a \textit{common} dictionary. The reason is that, otherwise each channel has its sparse representation over an entirely different dictionary than another channel and it becomes extremely difficult to design a precoder that jointly sparsifies all channels. We call the atoms of the common dictionary as ``virtual beams". We want the virtual beams to be (at least approximately) equivalent to a set of eigenvectors, shared among all user channel covariances. This ensures that the number of beams that significantly contribute to the channel is not very different from the channel sparsity, as reflected in the KL expansion of each user channel. For covariances of dimension $2M$, this translates to finding a unitary matrix $\Vm,~\Vm^\herm \Vm = \mathbf{I}_{2M}$ that (approximately) diagonalizes all user channel covariances, i.e. $
\Vm^\herm \Sigmam_k^{dl} \Vm \approx \text{diag}(\dv_k),$  for $k=1,\ldots,K$ where $\dv_k$ is a $2M$-dimensional non-negative vector and the approximation is understood in the sense that a distance measure between the LHS and the RHS is sufficiently low. Fortunately, for a dual-polarized ULA such an approximate common eigenvector set exists. First, note that we can express a generic dual-polarized ULA covariance $\Sigmam$ in four blocks as
\begin{equation}\label{eq:dp_cov_block}
\Sigmam = \begin{bmatrix}
\Sigmam_{\sf HH} & \Sigmam_{\sf HV} \\
\Sigmam_{\sf VH} & \Sigmam_{\sf VV} 
\end{bmatrix}, 
\end{equation}
where $\Sigmam_{\sf HH} = \bE [ \hv_{\sf H}\hv_{\sf H}^\herm ]$, $\Sigmam_{\sf VV} = \bE [ \hv_{\sf V}\hv_{\sf V}^\herm ]$, and $\Sigmam_{\sf HV} = \Sigmam_{\sf VH}^\herm  = \bE [ \hv_{\sf H}\hv_{\sf V}^\herm ]$, where $\hv_{H}$ and $\hv_{V}$ are generic H and V channel vectors. The diagonal blocks $\Sigmam_{\sf HH} $ and $\Sigmam_{\sf VV} $ are Hermitian Toeplitz matrices of dimension $M$. The well-known Szeg{\"o} theorem states that for a Hermitian Toeplitz matrix of dimension $M\gg 1$, there exists a circulant matrix that approximately has the same eigenvalue distribution as the Toeplitz matrix \cite{adhikary2013joint}. Let us denote the circulant approximation of a generic Toeplitz matrix $\Tm$ by $\mathring{\Tm}$. The eigenvectors of a Hermitian circulant matrix are given by the DFT columns of the same size, namely by the columns of a matrix $\Fm\in \bC^{M\times M}$ where $[\Fm]_{m,n}=\frac{1}{\sqrt{M}}e^{j2\pi \frac{(m-1)(n-1)}{M}},~m,n=1,2,\ldots,M$. Therefore, we have $\mathring{\Tm}= \Fm \text{diag} (\mathring{\lambdav}) \Fm^\herm$, for some $\mathring{\lambdav}\in \bR^M$. From the Szeg{\"o} theorem it follows that the DFT matrix approximately diagonalizes large Toeplitz matrices. Applied to the problem in hand, we can compute the circulant approximation for $\Sigmam_{\sf HH}$ and $\Sigmam_{\sf VV}$ in a constructive way by defining
\begin{equation}\label{eq:vars}
[\mathring{\lambdav}_{\sf H}]_m=[\Fm^\herm \Sigmam_{\sf HH} \Fm ]_{m,m}, ~~ [\mathring{\lambdav}_{\sf V}]_m=[\Fm^\herm \Sigmam_{\sf VV} \Fm ]_{m,m}
\end{equation}
 and setting $\mathring{\Sigmam}_{\sf HH}=\Fm\, \text{diag} \left(\mathring{\lambdav}_{\sf H}\right) \Fm^\herm$ and $\mathring{\Sigmam}_{\sf VV} = \Fm\, \text{diag} \left(\mathring{\lambdav}_{\sf V}\right) \Fm^\herm$. Then, from the Szeg{\"o} theorem we have 
$\Sigmam_{\sf HH}\approx\Fm\, \text{diag} \left(\mathring{\lambdav}_{\sf H}\right) \Fm^\herm,\quad \Sigmam_{\sf VV}\approx\Fm\, \text{diag} \left(\mathring{\lambdav}_{\sf V}\right) \Fm^\herm$.
It follows that the ${\sf H}$ and ${\sf V}$ channel vectors admit a (approximate) representation over the columns of $\Fm=\left[ \fv_0,\ldots,\fv_{M-1} \right]$ as
$ \hv_{\sf H} \approx \Fm  \gv_{\sf H},~\hv_{\sf V} \approx \Fm  \gv_{\sf V}$,
where $\gv_{\sf H}\sim \cg \left(\mathbf{0},\text{diag} \left( \mathring{\lambdav}_{\sf H} \right) \right)$ and $\gv_{\sf V}\sim \cg \left(\mathbf{0},\text{diag} \left( \mathring{\lambdav}_{\sf V} \right) \right)$ are i.i.d complex Gaussian random vectors. The elements $[\mathring{\lambdav}_{\sf H}]_m$ and $[\mathring{\lambdav}_{\sf V}]_m$ are an approximation of the variance of the $H$ and $V$  channel coefficients along the $m$-th virtual beam $\fv_m$. We call the $2M$-dim vector $\gv = [\gv_{\sf H}^{\transp},\gv_{\sf V}^{\transp}]^\transp$ \textit{the dual-polarized channel coefficients vector}.  

From the discussion above we conclude that the dual-polarized DL channel vector of the $k$-th user $\hdlk$ is related to its corresponding channel coefficients as
\begin{equation}\label{eq:h_dl_represent}
\hv_{dl,k} \approx \widetilde{\Fm} \gv_k,\, \,k=1,\ldots,K
\end{equation}
where the Kronecker product $\widetilde{\Fm}  = \mathbf{I}_2 \otimes \Fm $ represents the set of common virtual beams for the dual-polarized channel among all users. For every $m$, the elements $[\gv_{\sf H,k}]_m$ and $[\gv_{\sf V,k}]_m$ are correlated, due to the correlation between horizontal and vertical channels. Note that, representing the channel coefficients over the angular domain, $\gv_k$ is usually a sparse vector in the massive MIMO regime, i.e. it has significantly large elements only over a limited set of indices $\Jc_k$, known as the support set such that $|\Jc_k|\ll 2M$. 
\subsection{User-Virtual Beam Graph Representation}
Let us define the channel matrix as $\Hm = [\hv_{dl,1},\ldots,\hv_{dl,K}]\in \bC^{2M\times K}$, which is related to the matrix of channel coefficients $\Gm = [\gv_1,\ldots,\gv_K]\in \bC^{2M\times K}$ as $\Hm = \widetilde{\Fm}\Gm$. 
\begin{remark}\label{remark:G_dependence}
	The elements of the coefficients matrix $\Gm$ are not  i.i.d, but they entail a special type of dependence: any Gaussian element $[\Gm]_{m,k}$ $k=1,\ldots,K,~m=1,\ldots,M$, is correlated with (at most) a single element $[\Gm]_{M+m,k}$, namely its peer coefficient for the vertical polarization. 
	\end{remark}
Since $\widetilde{\Fm}$ is a unitary matrix, we have
$\rank (\Hm) = \rank (\Gm)$, which is a useful identity, since now claims about the rank of $\Gm$ immediately carry over to those about the rank of $\Hm$. The following lemmas relate the rank of $\Gm$ to a graph-theoretic property.  

\begin{lemma}\label{lem:CUR}
	[Skeleton decomposition \cite{goreinov1997theory}] Consider $\Gm\in \bC^{2M\times K}$, of rank $r$. Let $\Qm$ be an $r\times r$ non-singular
	intersection submatrix obtained by selecting $r$ rows and $ r$ columns of $\Gm$. Then, we have $\Gm = \Lm \Om \Rm$,
	where $\Lm \in \bC^{2M\times r}$ and $\Rm \in \bC^{r\times K}$ are the matrices of the selected columns and rows forming the intersection $\Qm$ and $\Om=\Qm^{-1}$.
	\hfill $\square$
\end{lemma}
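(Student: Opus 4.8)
The plan is to exploit the fact that a rank-$r$ matrix is completely determined by any $r$ of its linearly independent columns together with any $r$ of its linearly independent rows, and that the non-singularity of the intersection block $\Qm$ is precisely what certifies this independence. Write $\clI \subset \{1,\ldots,K\}$ and $\clJ \subset \{1,\ldots,2M\}$ for the index sets of the selected columns and rows, so that $\Lm = \Gm_{:,\clI}$, $\Rm = \Gm_{\clJ,:}$, and $\Qm = \Gm_{\clJ,\clI} = \Lm_{\clJ,:} = \Rm_{:,\clI}$.

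First I would establish that $\Lm$ has full column rank $r$. Indeed $\Qm$ is an $r\times r$ submatrix of $\Lm$ and is non-singular by hypothesis, so the $r$ columns of $\Lm$ are linearly independent; since they are columns of $\Gm$ and $\rank(\Gm)=r$, the column space of $\Lm$ coincides with that of $\Gm$. Hence every column of $\Gm$ lies in the span of the columns of $\Lm$, i.e.\ there is a (unique, by left-invertibility of $\Lm$) matrix $\Xm\in\bC^{r\times K}$ with $\Gm = \Lm\Xm$.

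Next I would identify $\Xm$ by restricting the identity $\Gm=\Lm\Xm$ to the rows indexed by $\clJ$: this gives $\Rm = \Gm_{\clJ,:} = \Lm_{\clJ,:}\,\Xm = \Qm\Xm$, and since $\Qm$ is invertible, $\Xm = \Qm^{-1}\Rm = \Om\Rm$. Substituting back yields $\Gm = \Lm\Om\Rm$, which is the assertion.

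The argument is short once the indexing is in place; the only step that genuinely uses the hypotheses — and hence the ``main obstacle'', such as it is — is the first one, where both $\rank(\Gm)=r$ (so that the selected columns span \emph{all} of the column space, not a proper subspace) and the non-singularity of $\Qm$ (so that the selected columns are actually independent) are needed. Symmetrically, one may instead begin from a factorization $\Gm = \Ym\Rm$, valid because the rows of $\Rm$ span the row space of $\Gm$, and then evaluate on the columns indexed by $\clI$ to obtain $\Lm = \Ym\Qm$, whence $\Ym = \Lm\Om$ and again $\Gm = \Lm\Om\Rm$.
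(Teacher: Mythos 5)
Your proof is correct and complete: establishing that the selected columns $\Lm$ have full column rank (via the non-singularity of $\Qm$) and span the column space of $\Gm$ (via $\rank(\Gm)=r$), writing $\Gm=\Lm\Xm$, and then identifying $\Xm=\Qm^{-1}\Rm$ by restricting to the selected rows is exactly the standard argument for the skeleton decomposition. The paper itself gives no proof of this lemma, citing only \cite{goreinov1997theory}, and your derivation matches the classical one found there, so nothing is missing.
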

\begin{lemma}\label{lem:matching}
	[Rank and perfect matchings] Let $\Qm$ denote an $r\times r$ matrix with some elements identically zero, and the non-identically zero elements drawn from a continuous distribution, such that an element $[\Qm]_{i,j}$ is independent from all elements that are not in the same row or column with it (it may or may not be dependent on elements in the same row or same column). Consider the associated
	bipartite graph with adjacency matrix $\Am$ such that $\Am_{i,j}=1$ if $\Qm_{i,j}$ is not identically zero, and $\Am_{i,j} = 0$ otherwise.
	Then, $\Qm$ has rank $r$ with probability 1 if and only if the associated bipartite graph contains a perfect matching. \hfill $\square$
\end{lemma}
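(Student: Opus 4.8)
The plan is to run the argument through the Leibniz expansion of the determinant. Writing $\frkS_r$ for the symmetric group on $\{1,\dots,r\}$, we have
\[
\det \Qm \;=\; \sum_{\sigma \in \frkS_r} \sign(\sigma)\,\prod_{i=1}^{r} [\Qm]_{i,\sigma(i)} ,
\]
and the summand indexed by $\sigma$ is identically zero unless $[\Qm]_{i,\sigma(i)}$ is not identically zero for every $i$, i.e.\ unless $\Am_{i,\sigma(i)}=1$ for all $i$; this last condition is precisely the statement that $\sigma$ is a perfect matching of the bipartite graph associated with $\Am$. This disposes of the ``only if'' direction at once: if the graph has no perfect matching then every summand vanishes identically, so $\det\Qm \equiv 0$ and $\rank\Qm < r$ deterministically.

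For the converse, assume a perfect matching $\sigma^\star$ exists. First I would check that $\det\Qm$, regarded as a polynomial in the symbols $\{[\Qm]_{i,j}:\Am_{i,j}=1\}$, is not the zero polynomial. The remaining (non-matching) summands are identically zero as just noted, while the map sending a permutation $\sigma$ to the monomial $\prod_i [\Qm]_{i,\sigma(i)}$ is injective --- the monomial records $\sigma(i)$ as the column index of its row-$i$ factor --- so the perfect-matching terms are pairwise distinct squarefree monomials with coefficients $\pm 1$; none can cancel, and in particular the monomial attached to $\sigma^\star$ survives. The remaining task is to upgrade ``nonzero as a polynomial'' to ``nonzero with probability one'', and this is exactly where the dependence structure must be used.

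The route I would take here is to partition the nonzero entries into maximal clusters of mutually (potentially) dependent variables. By hypothesis each cluster sits inside a single row or a single column, distinct clusters are independent, and within each cluster the joint law is continuous, hence absolutely continuous with respect to Lebesgue measure on the appropriate Euclidean space; therefore the joint law of all the entries is a product of absolutely continuous laws and so is itself absolutely continuous. Since a nonzero polynomial vanishes only on a Lebesgue-null set, it follows that $\det\Qm \ne 0$ almost surely, i.e.\ $\rank\Qm = r$ with probability one. A cleaner inductive variant, which avoids reasoning about the global joint law, is to induct on $r$: relabel columns so that $\sigma^\star$ is the identity; expand $\det\Qm$ (which is affine in each individual entry) along the last column; the cofactor of $[\Qm]_{r,r}$ is the determinant of an $(r-1)\times(r-1)$ submatrix whose bipartite graph still contains a perfect matching, hence is nonzero a.s.\ by the induction hypothesis; conditioning on all entries outside the last column, $\det\Qm$ becomes an affine function of the last column with a nonzero coefficient multiplying $[\Qm]_{r,r}$, and the (continuous) conditional law of the last column puts zero mass on the single hyperplane where this affine function vanishes.

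The principal obstacle is handling the statistical dependence: for i.i.d.\ (or merely independent) continuous nonzero entries the equivalence is the classical fact that ``generic rank'' equals ``term rank'' (Frobenius--K\"onig / Edmonds), and the only real content here is verifying that the dependencies allowed by the hypothesis do not let the joint law concentrate on the hypersurface $\{\det\Qm=0\}$. This is safe as long as each cluster of mutually dependent entries carries a non-degenerate, absolutely continuous joint distribution. In the application to $\Gm$ the only dependencies, by Remark~\ref{remark:G_dependence}, are between the two peer coefficients $[\gv_{\sf H,k}]_m$ and $[\gv_{\sf V,k}]_m$ of a common virtual beam $\fv_m$ for a fixed user $k$ --- two-element clusters living in a single column, jointly complex Gaussian with a non-degenerate $2\times 2$ covariance --- so the situation is exactly the one covered above.
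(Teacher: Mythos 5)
Your proposal follows the same backbone as the paper's proof---the Leibniz expansion of $\det\Qm$, with the observation that a permutation contributes a not-identically-zero term exactly when it is a perfect matching of the graph of $\Am$---and your ``only if'' direction is identical to the paper's. Where you genuinely go beyond the paper is in the ``if'' direction: the paper only notes that the entries appearing in a single matching product are independent and continuously distributed, hence that product is a.s.\ nonzero, and immediately concludes $\det\Qm\neq 0$ a.s., which silently ignores possible cancellation among distinct matching terms; you close exactly that hole by first showing $\det\Qm$ is a nonzero polynomial in the symbols $[\Qm]_{i,j}$ (the matching monomials are distinct and squarefree, so they cannot cancel formally) and then arguing that the joint law of the entries does not charge the zero set of a nonzero polynomial, with the cofactor-expansion/conditioning induction as an equally valid alternative finish. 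The one soft spot is the inference ``the cluster law is continuous, hence absolutely continuous'': continuity (or atomlessness) of a dependent cluster does not give joint absolute continuity, and without it the conclusion can genuinely fail---e.g.\ the $2\times 2$ matrix with rows $(X,X)$ and $(Y,Y)$, $X$ and $Y$ independent continuous variables, satisfies the lemma's stated dependence hypothesis (equal entries share a row), its bipartite graph has a perfect matching, yet its determinant is identically zero. So the lemma really requires the dependent clusters to carry non-degenerate, jointly absolutely continuous laws (or at least laws not charging the determinant's zero set), which is precisely the caveat you state at the end and which holds in the intended application of Remark \ref{remark:G_dependence}, where the clusters are non-degenerate jointly Gaussian H--V pairs; the paper's own one-line argument implicitly assumes at least as much, so your write-up is, if anything, the more careful rendition of the same idea.
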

\begin{proof}
	The determinant of $\Qm$ is given by the expansion $\text{det}(\Qm)=\sum_{\iota\in \boldsymbol{\pi}_r} \text{sgn}(\iota) \prod_{i} [\Qm]_{i,\iota (i)}$, where $\iota$ is a permutation of the set $\{1,2,\ldots,r\}$, where $\boldsymbol{\pi}_r$ is the set of all such permutations and where $\text{sgn}(\iota)$ is either 1 or -1. The product $\prod_{i} [\Qm]_{i,\iota (i)}$ is non-zero only for the perfect matchings in the bipartite graph. Hence, if the bipartite graph contains a perfect matching, then $\text{det}(\Qm)\neq 0$ with probability 1 (and $\text{rank}(\Qm)=r$), since the non-identically zero entries of $\Qm$ are drawn from a continuous distribution, such that all elements involved in the product $\prod_{i} [\Qm]_{i,\iota (i)}$ are independent (no two elements from either the same row or the same column are involved in this product). If it does not contain a perfect matching, then $\text{det}(\Qm)=0$ and therefore $\text{rank}(\Qm)<r$. \hfill 
\end{proof}	
The following corollary emerges from a combination of Lemmas \ref{lem:CUR} and \ref{lem:matching}.
\begin{corollary}\label{corollary:matching}
	The rank $r$ of the random channel coefficients matrix $\Gm \in \bC^{2M\times K}$, with the particular statistical structure explained in Remark \ref{remark:G_dependence} is given, with probability 1, by the size of the largest intersection submatrix whose associated bipartite graph (defined as in Lemma \ref{lem:matching}) contains a perfect matching.   \hfill $\square$
\end{corollary}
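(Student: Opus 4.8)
The plan is to derive the corollary by combining the skeleton decomposition of Lemma~\ref{lem:CUR} with the matching criterion of Lemma~\ref{lem:matching}, proving the two bounds $\rank(\Gm)\ge s^\ast$ and $\rank(\Gm)\le s^\ast$ separately, where $s^\ast$ denotes the size of the largest square intersection submatrix of $\Gm$ whose associated bipartite graph admits a perfect matching.

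The first step is to check that an arbitrary square submatrix $\Qm$ obtained by intersecting a set of rows with a set of columns of $\Gm$ meets the hypotheses of Lemma~\ref{lem:matching}. Its identically-zero entries form a fixed pattern inherited from $\Gm$; the remaining entries are jointly Gaussian, hence have a continuous law; and by Remark~\ref{remark:G_dependence} the only dependence among the entries of $\Gm$ is between an H-coefficient $[\Gm]_{m,k}$ and its V-peer $[\Gm]_{M+m,k}$, which lie in the same column. Consequently every entry of $\Qm$ is independent of all entries of $\Qm$ outside its own row and column, exactly as Lemma~\ref{lem:matching} requires.

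For the lower bound I would take a size-$s^\ast$ intersection submatrix $\Qm^\ast$ whose bipartite graph contains a perfect matching; Lemma~\ref{lem:matching} then yields $\rank(\Qm^\ast)=s^\ast$ with probability one, and since $\Qm^\ast$ is a submatrix of $\Gm$ this forces $\rank(\Gm)\ge s^\ast$ almost surely. For the upper bound I would argue deterministically: on any realization with $\rank(\Gm)=r$, Lemma~\ref{lem:CUR} supplies a nonsingular $r\times r$ intersection submatrix $\Qm$, and if the bipartite graph of $\Qm$ had no perfect matching then the determinant expansion in the proof of Lemma~\ref{lem:matching} gives $\det(\Qm)\equiv 0$, contradicting nonsingularity. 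Hence the graph of $\Qm$ does contain a perfect matching, so $r\le s^\ast$, i.e.\ $\rank(\Gm)\le s^\ast$ on every realization. Combining the two bounds gives $\rank(\Gm)=s^\ast$ with probability one, and the identity $\rank(\Hm)=\rank(\Gm)$ transfers the conclusion to the channel matrix.

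The main obstacle I anticipate is not the rank bookkeeping but the careful verification of the independence hypothesis for submatrices: one must confirm that whenever a coefficient and its polarization peer are both selected into $\Qm$ they necessarily land in a common column (so the dependence is of the type tolerated by Lemma~\ref{lem:matching}), and that no other pair of entries of $\Gm$ is correlated. A secondary subtlety worth flagging is that the ``no perfect matching $\Rightarrow$ singular'' half of Lemma~\ref{lem:matching} is a deterministic polynomial identity in the entries; this is what allows the upper-bound argument to avoid a union bound over the exponentially many candidate submatrices, whereas the lower bound only invokes the probability-one conclusion for a single submatrix.
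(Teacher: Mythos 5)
Your argument is correct and follows exactly the route the paper intends: the corollary is stated there as an immediate combination of Lemma~\ref{lem:CUR} (skeleton decomposition, for the upper bound via a nonsingular intersection submatrix) and Lemma~\ref{lem:matching} (perfect matching criterion, for the almost-sure lower bound), which is precisely your two-bound structure. Your explicit check that the H--V peer dependence of Remark~\ref{remark:G_dependence} survives row/column selection as a same-column dependence, and your observation that the ``no matching $\Rightarrow$ singular'' direction is deterministic so no union bound is needed, simply make explicit what the paper leaves implicit.
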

This corollary implies that we can study the rank properties of $\Gm$ by associating to it a bipartite graph. On one side of this graph we have $2M$ nodes, representing the $2M$ virtual beams (columns of $\widetilde{\Fm}$) and on its other side, we have $K$ nodes representing the users. The nodes are connected according to the adjacency matrix as introduced in Lemma \ref{lem:matching}. From a practical viewpoint, a user node is connected to a virtual beam node, if and only if the the user channel has a ``strong enough" coefficient along that beam (this point becomes clear shortly). Then we know from Corollary 1 that maximizing rank of $\Gm$ (hence rank of $\Hm$) is equivalent to maximizing the matching size in a sub-graph of the user-virtual beam bipartite graph. The sub-graph corresponds to those beams and users that will be eventually present in the effective channel matrix. This sub-graph can not be selected arbitrarily, but such that the number of significant channel coefficients (channel sparsity over dictionary $\widetilde{\Fm}$) for any user in the effective channel matrix induced by the sub-graph must be less than the pilot dimension $\Tdl$ so that stable channel estimation is possible according to Lemma \ref{lem:stable_rec}.

Let us introduce the user-virtual beam bipartite graph as $\Gc (\Vc,\Kc,\Ec)$, where $\Kc$ denotes a set of $K$ nodes on one side of the graph representing the users and $\Vc$ is a set of $2M$ nodes representing the virtual beams (columns of $\widetilde{\Fm}$). The node $k\in \Kc$ is connected to a virtual beam $m\in \Vc$ if and only if the variance of the channel coefficient of user $k$ along virtual beam $m$ is greater than a predefined threshold $\varepsilon>0$, i.e. $\Ec = \{  (k,m)\,:\, \bE [ |[\gv_k]_m|^2 ] \ge \varepsilon \}$.
Since $\widetilde{\Fm}$ is a block diagonal matrix with the DFT matrix $\Fm$ as its diagonal blocks, the variance of user $k$ along the first $M$ nodes of $\Vc$ is given by the vector of horizontal channel coefficient variances $\lambdav_{{\sf H}, k}$ and its variance along the second $M$ nodes of $\Vc$ is given by the vector of vertical channel coefficient variances $\lambdav_{{\sf V},k}$ (see \eqref{eq:vars}). Define the $2M$-dim vector of coefficient variances for user $k$ as $\lambdav_k = [\lambdav_{{\sf H},k}^{\transp},\lambdav_{{\sf V},k}^{\transp}]^\transp$. Then an edge between nodes $m\in \Vc$ and $k \in \Kc$ exists if and only if $ [\lambdav_k]_m \ge \varepsilon$ and the weight assigned to this edge is defined as $[\lambdav_k]_m$. These conventions define the adjacency matrix $\Am \in \bR^{2M\times K}$ and its weighted version $\Wm$ where $[\Wm]_{m,k}=[\lambdav_k]_m$ for $[\lambdav_k]_m\ge\varepsilon$ and $[\Wm]_{m,k}=0$ otherwise. See Fig. \ref{fig:bigraph} for an example of the user-virtual beam bipartite graph.

\subsection{Active Channel Sparsification}
Introducing the bipartite graph, we are in a position to transform the effective channel matrix rank maximization problem to the maximum cardinality matching problem over a bipartite graph. This shall be subject to a constraint on the number of significantly large elements (i.e. the channel \textit{sparsity}) in the coefficient vectors $\gv_k,\,k=1,\ldots,K$. Let $\Gc= (\Vc,\Kc,\Ec)$ denote the user-virtual beam bipartite graph as previously defined.
Also let $\Mc (\Vc',\Kc')$ denote a matching of the subgraph $\Gc' =(\Vc',\Kc',\Ec')$ 
of $\Gc$. A matching is a set of edges in a graph, such that no two edges share a vertex \cite{diestel2005graph}. Suppose $\Tdl$ to be the available DL pilot dimension. Maximizing the effective channel rank constrained to the limitation of the effective channel sparsity to $\Tdl$ is equivalent to the following optimization problem:	
\begin{subequations}\label{eq:my_opt_1}
	\begin{align}
	& \underset{\Vc'\subseteq \Vc, \Kc' \subseteq \Kc}{\text{maximize}} && \left\vert \Mc\left(\Vc',\Kc'\right)\right\vert    &&&~ \label{eq:my_opt_1-one}  \\
	& \text{subject to} &&  \text{deg}_{\Gc'} (k) \le \Tdl &&&\forall ~k\in \Kc',  \label{eq:my_opt_1-two} \\
	& ~ &&  \sum_{m\in \Nc_{\Gc'} (k) } [\Wm]_{m,k}  \ge P_0, &&& \forall ~ k\in \Kc', \label{eq:my_opt_1-three}
	\end{align}
\end{subequations} 
where $\text{deg}_{\Gc'} (k)$ denotes the degree of node $k$ in subgraph $\Gc'$, $\Nc_{\Gc'} (k) $ denotes the set of neighbor nodes to $k$ in $\Gc'$. $P_0\ge 0$ is a predefined power threshold. Constraint \eqref{eq:my_opt_1-two} ensures that the number of virtual beams contributing to the channel of user $k$ is less than or equal the pilot dimension $\Tdl$. The number of contributing virtual beams determines the channel sparsity in the beam domain and this constraint satisfies the condition necessary for stable channel estimation (see Lemma \ref{lem:stable_rec}).  Constraint \eqref{eq:my_opt_1-three} is a power constraint, which ensures that if a user is chosen to be served (i.e., is in the solution subgraph), then it should have sufficient power (at least $P_0$) along those virtual beams that contribute to it and are present in the solution subgraph. 
\begin{figure}[t]
	\centering
	\includegraphics[ width=0.45\textwidth]{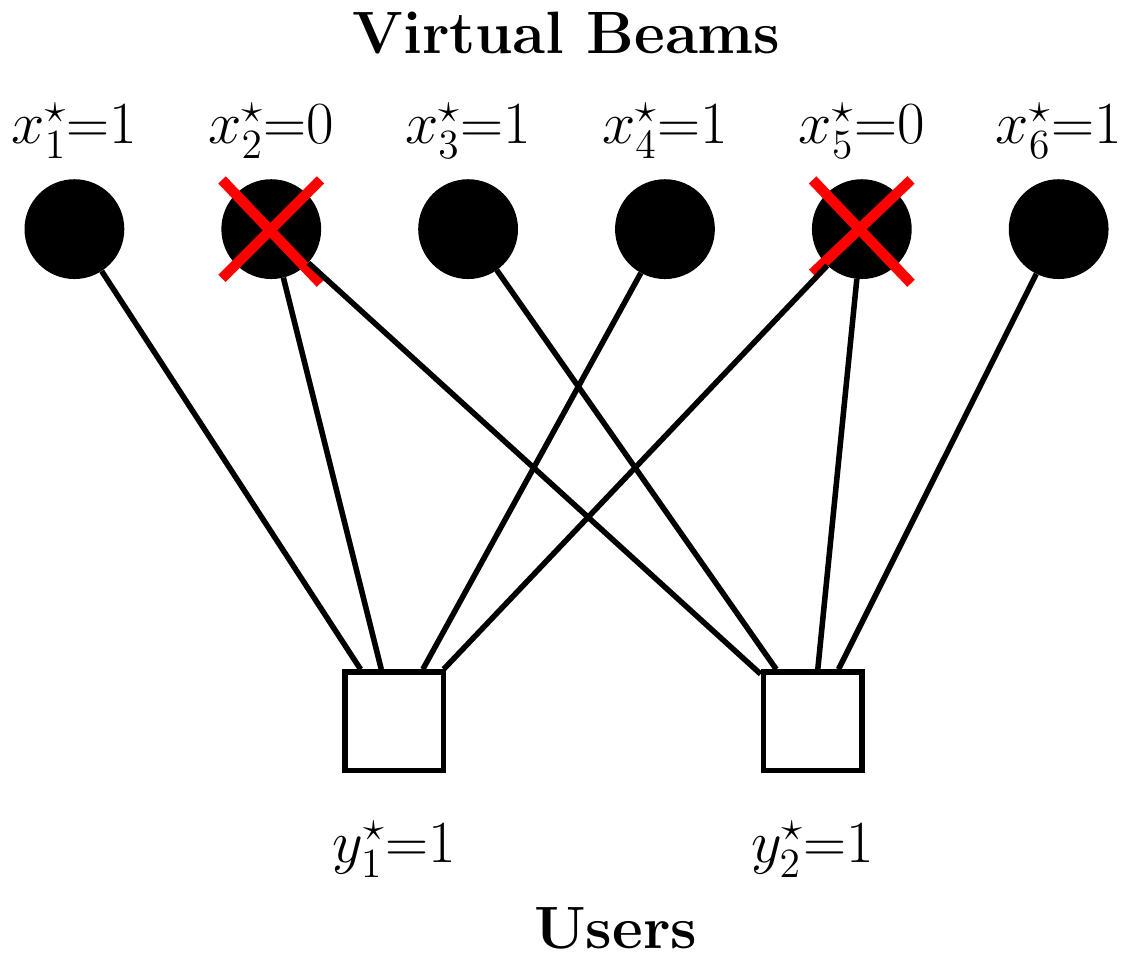}
	\caption{\small an example of a user-virtual beam bipartite graph with $K=2$ users and $2M=6$ virtual beams. The red crosses denote inactive (i.e., eliminated) beams after solving the MILP with $\Tdl=2$. }
	\label{fig:bigraph}
\end{figure} 
\begin{theorem}\label{thm:MILP_formulation}
	An optimal solution to the optimization problem in \eqref{eq:my_opt_1} is given by solving the mixed integer linear program (MILP) below:
	\begin{subequations}\label{opt:P_MILP_Thm}
		\begin{align}
		& \underset{x_m,y_k,z_{m,k} }{\text{maximize}}  && \sum_{m\in \Vc} \sum_{k \in \Kc} z_{m,k} + \delta \sum_{m \in \Ac} x_m  \label{eq:obj_1_thm} \\
		& \text{subject to}  && z_{m,k} \le [\Am]_{m,k} ~~\forall m\in\Vc,k\in \Kc, \label{eq:first_thm}\\
		&~ && \sum_{k\in \Kc}z_{m,k} \le x_m ~~ \forall m\in \Vc,\\ 
		&~ && \sum_{m\in \Vc}z_{m,k} \le y_k ~~ \forall k\in \Kc,  \\
		&~ &&\underset{m\in \Vc}{\sum} [\Am]_{m,k} x_m \le \Tdl y_k + 2M (1-y_k)~\forall k\in \Kc\\
		&~ &&P_0\, y_k \le \sum_{m \in \Vc} [\Wm]_{m,k} x_m ~~\forall k \in \Kc,  \\
		&~&& x_m \le \sum_{k\in \Kc} [\Am]_{m,k} y_k ~~ \forall m\in \Ac, \\
		&~&& x_m, y_k \in \{0,1\} ~~\forall m\in \Vc,k\in \Kc,\\
		&~&& z_{m,k} \in [0,1] ~~\forall m\in \Vc,k\in \Kc \label{eq:last_thm},
		\end{align}
	\end{subequations}
	where $0<\delta<\frac{1}{2M}$ is a small positive scalar. The binary variables $\{x_m \}_{m=1}^{2M}$ represent the virtual beams and the binary variables $\{y_k \}_{k=1}^K$ represent the users. The solution sub-graph is given by the set of nodes $\Vc^\star = \{m:x_m^\star= 1\}$ and $\Kc^\star = \{k:y_k^\star=1\}$, with $\{x_m^\star\}_{m=1}^{2M}$ and $\{y_k^\star\}_{k=1}^K$ being a solution of \eqref{opt:P_MILP_Thm}. \hfill $\square$
\end{theorem}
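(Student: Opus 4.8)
The plan is to exhibit \eqref{opt:P_MILP_Thm} as an exact reformulation of \eqref{eq:my_opt_1}, by reading the binaries as the indicator of a sub-graph and the continuous variables $z_{m,k}$ as a fractional (and, at the optimum, integral) matching on that sub-graph. Given binaries $(x,y)$, set $\Vc'=\{m\in\Vc:x_m=1\}$, $\Kc'=\{k\in\Kc:y_k=1\}$, and let $\Gc'=(\Vc',\Kc',\Ec')$ with $\Ec'=\Ec\cap(\Vc'\times\Kc')$. First I would set up the dictionary at the level of feasible sets. Since $[\Wm]_{m,k}=0$ exactly when $[\Am]_{m,k}=0$, for every user with $y_k=1$ one has $\sum_{m\in\Vc}[\Am]_{m,k}x_m=\deg_{\Gc'}(k)$ and $\sum_{m\in\Vc}[\Wm]_{m,k}x_m=\sum_{m\in\Nc_{\Gc'}(k)}[\Wm]_{m,k}$. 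Hence the ``degree'' constraint $\sum_m[\Am]_{m,k}x_m\le\Tdl y_k+2M(1-y_k)$ becomes $\deg_{\Gc'}(k)\le\Tdl$ when $y_k=1$ and is vacuous when $y_k=0$ (its left side never exceeds $2M$), reproducing \eqref{eq:my_opt_1-two}; likewise $P_0 y_k\le\sum_m[\Wm]_{m,k}x_m$ reproduces \eqref{eq:my_opt_1-three} when $y_k=1$ and is vacuous when $y_k=0$ because $[\Wm]_{m,k}\ge 0$. The constraint $x_m\le\sum_k[\Am]_{m,k}y_k$ forbids an active beam with no served neighbour; as such a beam is isolated in $\Gc'$ and so enters neither any matching nor any per-user constraint of \eqref{eq:my_opt_1}, restricting to sub-graphs without isolated beams loses no optimality. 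Thus the binary feasible points of \eqref{opt:P_MILP_Thm} are precisely the feasible sub-graphs of \eqref{eq:my_opt_1} carrying no isolated beam, and every feasible sub-graph of \eqref{eq:my_opt_1} can be pruned to one with the same matching number.

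Next I would identify, for fixed binary $(x,y)$, the optimal value of $\sum_{m,k}z_{m,k}$ with the maximum matching cardinality of $\Gc'$. Combining $z_{m,k}\ge 0$ with $z_{m,k}\le[\Am]_{m,k}$, $\sum_{k\in\Kc}z_{m,k}\le x_m$ and $\sum_{m\in\Vc}z_{m,k}\le y_k$, the only non-fixed variables are the $z_e$, $e\in\Ec'$, subject to $z_e\ge 0$ and $\sum_{e\ni v}z_e\le 1$ for every vertex $v$ of $\Gc'$; this is exactly the fractional matching polytope of the bipartite graph $\Gc'$. Its vertex--edge incidence matrix is totally unimodular, so the polytope is integral and its vertices are the indicator vectors of matchings of $\Gc'$; hence declaring $z_{m,k}\in[0,1]$ rather than $z_{m,k}\in\{0,1\}$ costs nothing, and $\max\sum_{m,k}z_{m,k}$ equals the matching number of $\Gc'$, i.e.\ $\max|\Mc(\Vc',\Kc')|$. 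Combined with the previous paragraph, maximizing the MILP objective over $z$ and over binary $(x,y)$ equals $\max_{(\Vc',\Kc')}\bigl(|\Mc(\Vc',\Kc')|+\delta\sum_{m\in\Ac}x_m\bigr)$, the outer maximum ranging over feasible sub-graphs of \eqref{eq:my_opt_1}.

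Finally I would dispatch the tie-breaker. Since $0<\delta<\frac{1}{2M}$ and $0\le\sum_{m\in\Ac}x_m\le 2M$, the term $\delta\sum_{m\in\Ac}x_m$ lies in $[0,1)$ and therefore cannot reverse the ordering of two feasible points whose (integer) matching numbers differ. Consequently the integer part of the optimal MILP value equals $\max_{(\Vc',\Kc')}|\Mc(\Vc',\Kc')|$, the optimum of \eqref{eq:my_opt_1}, and any optimizer $(x^\star,y^\star,z^\star)$ attains it; reading off $\Vc^\star=\{m:x_m^\star=1\}$ and $\Kc^\star=\{k:y_k^\star=1\}$ then gives an optimal solution of \eqref{eq:my_opt_1}. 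The $\delta$-term moreover singles out, among the optimal sub-graphs, one retaining the largest number of beams---the behaviour wanted for the sparsifying precoder---but this is not needed for the equivalence itself.

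The only step beyond routine bookkeeping is the second one: justifying that relaxing $z_{m,k}$ from $\{0,1\}$ to $[0,1]$ is lossless, so that the inner problem over $z$ always has an integral optimum equal to the maximum matching. This is precisely the total unimodularity of the bipartite incidence matrix (equivalently, K\"onig's theorem), and it is what keeps \eqref{opt:P_MILP_Thm} a genuine \emph{mixed}-integer program rather than a fully integer one. A minor point to pin down is the scope of $\Ac$ in the objective and in the constraint $x_m\le\sum_k[\Am]_{m,k}y_k$: the argument needs $\Ac\subseteq\Vc$ with $|\Ac|\le 2M$, so that the $\delta$-term stays below $1$, and $\Ac$ to contain every beam that is non-isolated in $\Gc$; both hold under the natural reading of $\Ac$ as the set of beams incident to at least one edge of $\Gc$.
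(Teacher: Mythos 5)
Your proof is correct, and it is essentially the argument the paper relies on: the paper itself gives no proof of Theorem~\ref{thm:MILP_formulation} beyond deferring to Theorem~1 of \cite{khalilsarai2018fdd}, whose reasoning is exactly your dictionary between binary points $(x,y)$ and feasible sub-graphs, the identification of the inner problem in $z$ with the fractional matching LP (integral by total unimodularity of the bipartite incidence matrix), and the observation that $0<\delta<\tfrac{1}{2M}$ makes the beam-count term a pure tie-breaker that cannot outweigh an integer change in matching size. Your closing remark on the scope of $\Ac$ (the beams appearing in the tie-breaker and the no-isolated-active-beam constraint) correctly pins down the one notational point the theorem statement leaves implicit.
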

\begin{proof}
	See the proof of  Theorem 1 in \cite{khalilsarai2018fdd}.
\end{proof}
The MILP introduced in \eqref{opt:P_MILP_Thm} can be solved for most practical array dimensions (for example, up to $M=128$) using standard solvers. We have used the built-in {\sf ``intlinprog"} routine in MATLAB to perform our simulations, provided in Section \ref{sec:simulations}. The solution of \eqref{opt:P_MILP_Thm} determines the set of users as well as virtual beams that are to be probed and served: a user $k$ is probed and served if and only if $y_k^\star=1$; similarly, a virtual beam $m$ is probed and served if and only if $x_m^\star=1$. Fig. \ref{fig:bigraph} provides a miniature example, in which we have $K=2$ users, $2M=6$ virtual beams and $\Tdl=2$. Here the maximum matching size is equal to two, and by omitting beams number 2 and 5 (red crosses), the MILP satisfies the constraint \eqref{eq:my_opt_1-two}, since now each user is connected to 2 ($\le \Tdl=2$) active beams. 

\subsection{Common DL Channel Training and Multi-User Precoding} \label{sec:precoding} 
Using the MILP solution, let us define $\Vc^\star=\{m:x_m^\ast=1\} := \{m_1,m_2,\ldots, m_{M'}\}$ as the set of $M'$ ``active" virtual beams (with cardinality $|\Vc^\star|=M'$) and $\Kc^\star = \{k: y^\ast_k = 1\}:= \{k_1,k_2,\ldots, k_{K'}\}$ as the set of $K'$ active users. We design the sparsifying precoding matrix in (\ref{eq:cs_eq_1}) as
\begin{equation}\label{eq:B_mat_final}
\Bm = \widetilde{\Fm}^{\herm}_{\Vc^\star},
\end{equation}
where $\widetilde{\Fm}_{\Vc^\star}$ is the $2M\times M'$ matrix consisting of the columns of $\widetilde{\Fm}$ whose indices are in $\Vc^\star$. The effective DL channel vector of user $k$ is given by the concatenation of this precoder with the full-dimensional channel, so that we have $\widetilde{\hv}_{dl,k} = \Bm \hv_{dl,k}\approx \widetilde{\Fm}^{\herm}_{\Vc^\star} \widetilde{\Fm}\gv_k,$ where the approximation is only due to the approximate virtual beam representation in \eqref{eq:h_dl_represent}. It is easy to show that, the vector $\widetilde{\hv}_{dl,k}$ is of dimension $M'$, and has significantly large components only over a subset of $\{1,2,\ldots,M'\}$ determined by the intersection of $\Vc^\star$ and the support of $\gv_{k}$, i.e. by $\Vc^\star \cap \Jc_k$. Recall that satisfying constraint \eqref{eq:my_opt_1-two} ensures that $|\Vc^\star \cap \Jc_k|\le \Tdl$, so that one can stably recover the effective channel vector by taking $\Tdl$ linearly independent pilot measurements via the matrix $\Psim$ (see Lemma \ref{lem:stable_rec}). A convenient choice is to let the DL pilot matrix $\Psim$ to be proportional to a random unitary matrix of dimension $\Tdl \times M'$, such that 
$\Psim \Psim^\herm = \Pdl  \Id_{\Tdl}$. Once user $k$ collects its pilot signal measurements in the form of the $\Tdl$-dimensional vector $\yv_{{dl},k}$, it feeds them back to the BS in $\Tdl$ UL channel uses via analog unquantized feedback (this type of feedback is analyzed in e.g. \cite{caire2010multiuser,kobayashi2011training}). Upon receiving the noisy pilot measurements $\yv_{{dl},k} = \Psim \Bm \hdlk +\zv_k$ for any user $k\in \{1,\ldots,K\}$, the BS can obtain the minimum mean squared error (MMSE) estimate of the $2M$-dimensional DP channel $\hdlk$ as
\begin{equation}\label{eq:eff_ch_est}
\widehat{\hv}_{dl,k} = \Sigmam_{\hdlk \yv_{{dl,k}}} \Sigmam_{\yv_{dl,k}\yv_{dl,k}}^{-1}\yv_{dl,k},
\end{equation}
where \scalebox{0.9}{$\Sigmam_{\hdlk \yv_{dl,k}} = \bE \left[  \hdlk \yv_{dl,k}^\herm \right] = \Sigmam_k^{dl} \Bm^\herm \Psim^\herm$} and \scalebox{0.9}{$\Sigmam_{\yv_{dl,k}\yv_{dl,k}} = \bE \left[ \yv_{dl,k} \yv_{dl,k}^\herm \right] = \Psim  \Bm  \Sigmam_k^{dl} \Bm^\herm \Psim^\herm + N_0 \mathbf{I}_{\Tdl} $}.

\subsection{Beamforming and Data Transmission}
Without loss of generality, let us assume that the BS wants to serve the first $K'$ users, using a beamforming scheme that is ideally interference-free. We consider zero-forcing beamforming (ZFBF) for this purpose, where the ZFBF matrix $\Vm_{\sf ZF}$ is given by the column-normalized 
version of the Moore-Penrose pseudoinverse of the estimated effective channel matrix defined as $\widehat{\Hm}_{\sf eff} = \Bm \widehat{\Hm} = \Bm \left[ \widehat{\hv}_{dl,1},\widehat{\hv}_{dl,2},\ldots, \widehat{\hv}_{dl,K'}\right]\in \bC^{M'\times K'}$, so that we have $\Vm_{\sf ZF} = 
\widehat{\Hm}_{\sf eff}^\dagger \Jm^{1/2}$, where $\widehat{\Hm}_{\sf eff}^\dagger = 
\widehat{\Hm}_{\sf eff}  \left ( \widehat{\Hm}_{\sf eff}^\herm \widehat{\Hm}_{\sf eff} \right )^{-1}$ and $\Jm$ is a diagonal matrix, normalizing the columns of $\Vm_{\sf ZF}$. A channel use of the DL precoded data transmission phase at the $k$-th user receiver takes on the form
\begin{equation} \label{receivedk}
r_k =  \hdlk ^\herm \Bm^\herm  \Vm_{\sf ZF} \Pm^{1/2} \sv + n_k, 
\end{equation} where $\sv \in \bC^{K' \times 1}$ is a vector of unit-energy user data symbols, $\Pm$ is a diagonal matrix defining the 
power allocation to the DL data streams and $n_k\sim\cg (\mathbf{0},N_0 )$ is the AWGN. The transmit power constraint is given by $ \trace( \Bm^\herm \Vm_{\sf ZF} \Pm \Vm^\herm_{\sf ZF} \Bm ) = \trace ( \Vm_{\sf ZF}^\herm \Vm_{\sf ZF} \Pm ) = \trace (\Pm) = \Pdl$,
where we used $\Bm \Bm^\herm = \Id_{M'}$ and the fact that $\Vm_{\sf ZF}^\herm \Vm_{\sf ZF}$ has unit diagonal elements by construction. 
We use the simple uniform power allocation $[\Pm]_{k,k} = \frac{\Pdl}{K'}$ to each $k$-th user data stream. The received symbol at user $k$ receiver is given by $ r_k = b_{k,k} \sv_k + \sum_{\ell \neq k} b_{k,\ell} \sv_{\ell}   +  n_k,$
where the coefficients $b_{k,1}, \ldots, b_{k,K'}$ are given by the elements of the $1 \times K'$ row vector $\hdlk^\herm \Bm \Vm \Pm^{1/2}$ in (\ref{receivedk}). In the presence of an accurate channel estimation we expect that $b_{k,k} \approx \sqrt{[\Jm]_{k,k} [\Pm]_{k,k}}$ and $b_{k,\ell} \approx 0$ for $\ell \neq k$. However, this is not a given, since in general there typically exists a non-negligible channel estimation error. For simplicity, in order to calculate the ergodic sum-rate, here we assume that the coefficients 
$b_{k,1}, \ldots, b_{k,K'}$ are known to the corresponding receiver $k$. Including the DL training overhead, this yields the rate expression (see \cite{caire2018ergodic}):
\begin{equation}\label{eq:rate_ub}
R_{\rm sum}  = \left (1 - \frac{\Tdl}{T} \right ) \sum_{k =1}^{K'} \bE \left [ \log \left ( 1 + \frac{\left|b_{k,k}\right|^2}{N_0 + \sum_{\ell\neq k} \left|b_{k,\ell} \right|^2} \right ) \right ].
\end{equation}	

\section{Simulation Results}\label{sec:simulations}
In this section, we empirically examine the performance of our scheme in different aspects of dual-polarized UL channel covariance estimation, UL-DL covariance transformation and common multi-user DL channel training and precoding. We compare the covariance estimation performance of our method with the sample covariance estimator in terms of the mean normalized Frobenius norm error, defined as
\begin{equation}\label{eq:E_NF}
E_\text{NF} = \mathbb{E}\left\{  \frac{||\mathbf{\Sigmam}_\mathbf{h}- \wh{\Sigmam}_\mathbf{h}||_{\sf Fß}   }{||\mathbf{\Sigmam}_\mathbf{h}||_{\sf F}}  \right\},
\end{equation}
where $\Sigmam_{\hv}$ is the true channel covariance and $\widehat{\Sigmam}_{\hv}$ is its estimate and where the expectation is taken over several sources of randomness in the channel, namely, random ASFs, random channel realizations in the sample set and random additive noise.  

We consider a BS equipped with a ULA of $M$ antennas with $\lambda_{ul}/2$ spacing. To examine the covariance estimation performance, we suppose $N=2\kappa M$ independent samples of the $2M$-dimensional dual-polarized channel are available, where $\kappa=\frac{N}{2M}$ denotes the ratio between the sample set size and the channel dimension. 
The number of density functions used to approximate the continuous DP-ASF in \eqref{eq:Gamma_c_approx} is set to $n=3 M$. In order to produce (semi-)random Horizontal and Vertical ASFs we consider the following generative model: 
\begin{equation}\label{eq:gamma_H_rnd}
\begin{aligned}
\gammaH (\xi) = \frac{\alpha}{|\Ic_1|+|\Ic_2|}\left(\rect_{\Ic_1}(\xi)+\rect_{\Ic_2}(\xi) \right)+ \frac{1-\alpha}{2}\left( \delta(\xi - \xi_1)+\delta (\xi - \xi_2) \right),
\end{aligned}
\end{equation}
where for an interval $\Ic\subset [-1,1]$, we have defined the rectangular function as $\rect_{\Ic}(\xi) =1$ for $\xi \in \Ic$ and $\rect_{\Ic}(\xi) =0$ for $\xi \notin \Ic$.
The intervals $\Ic_1$ and $\Ic_2$ are subsets of $[-1,1]$, each of length $|\Ic_1|$ and $|\Ic_2|$, respectively, where the lengths are chosen uniformly at random between $0.1$ and $0.4$, i.e. $|\Ic_j|\sim \mathbb{U}([0.1,0.4])$, independently for $j=1$ and $j=2$. Besides, $\xi_1,\, \xi_2\in [-1,1]$ denote discrete AoAs, generated independently and uniformly at random over $[-1,1]$. The scalar $\alpha \in [0,1]$ denotes what we call the continuous-to-discrete ASF ratio. Basically, since $\int_{-1}^1 \frac{1}{|\Ic_1|+|\Ic_2|}\left(\rect_{\Ic_1}(\xi)+\rect_{\Ic_2}(\xi) \right) \, d\xi=1$ and $\int_{-1}^1 \frac{1}{2}\left( \delta(\xi - \xi_1)+\delta (\xi - \xi_2) \right)\, d\xi=1$, $\alpha$ controls the contribution of the continuous part versus the discrete part to the overall ASF: for $\alpha =0$ we have a purely discrete ASF, for $\alpha = 1$ we have a purely continuous one and for $\alpha \in (0,1)$ we have a mixture of the two. Similarly, we generate the vertical ASF as: 
\begin{equation}\label{eq:gamma_V_rnd}
\begin{aligned}
\gammaV (\xi) &= \frac{\alpha}{|\Ic_1'|+|\Ic_2'|}\left(\rect_{\Ic_1'}(\xi)+\rect_{\Ic_2'}(\xi) \right)+ \frac{1-\alpha}{2}\left( \delta(\xi - \xi_1')+\delta (\xi - \xi_2') \right),
\end{aligned}
\end{equation}
Since it is natural for the horizontal and vertical ASFs to overlap in their support, we assume the discrete AoAs to be the same, i.e. $\xi_1'=\xi_1$ and $\xi_2'=\xi_2$, and we assume $\Ic_1'$ and $\Ic_2'$ to be slightly shifted versions of $\Ic_1$ and $\Ic_2$ as $\Ic_1' = \Ic_1+0.1$ and $\Ic_2' = \Ic_2+0.1$. Finally, we assume the cross-correlation function $\rho (\xi)$ to take on the form $\rho (\xi) = \beta \sqrt{\gammaH (\xi) \, \gammaV (\xi)}$, where $\beta \in [0,1]$ is a scalar that controls the cross-correlation level between H and V channels. This is a simplifying assumption on the form of $\rho (\xi)$, which does not undermine the generality of the DP-ASF, and satisfies the necessary condition $|\rho (\xi)|^2\le \gammaH (\xi) \gammaV (\xi)$ for the DP-ASF $\Gammam (\xi)$ to be a PSD matrix-valued function for all $\xi \in [-1,1]$. In addition, we can change the cross-correlation between H and V channels simply by changing $\beta$. The larger $\beta$ is, the more correlated the polar channels are.

\begin{figure*}
	\centering
	\begin{subfigure}{.5\textwidth}
		\centering
		\includegraphics[width=.9\linewidth]{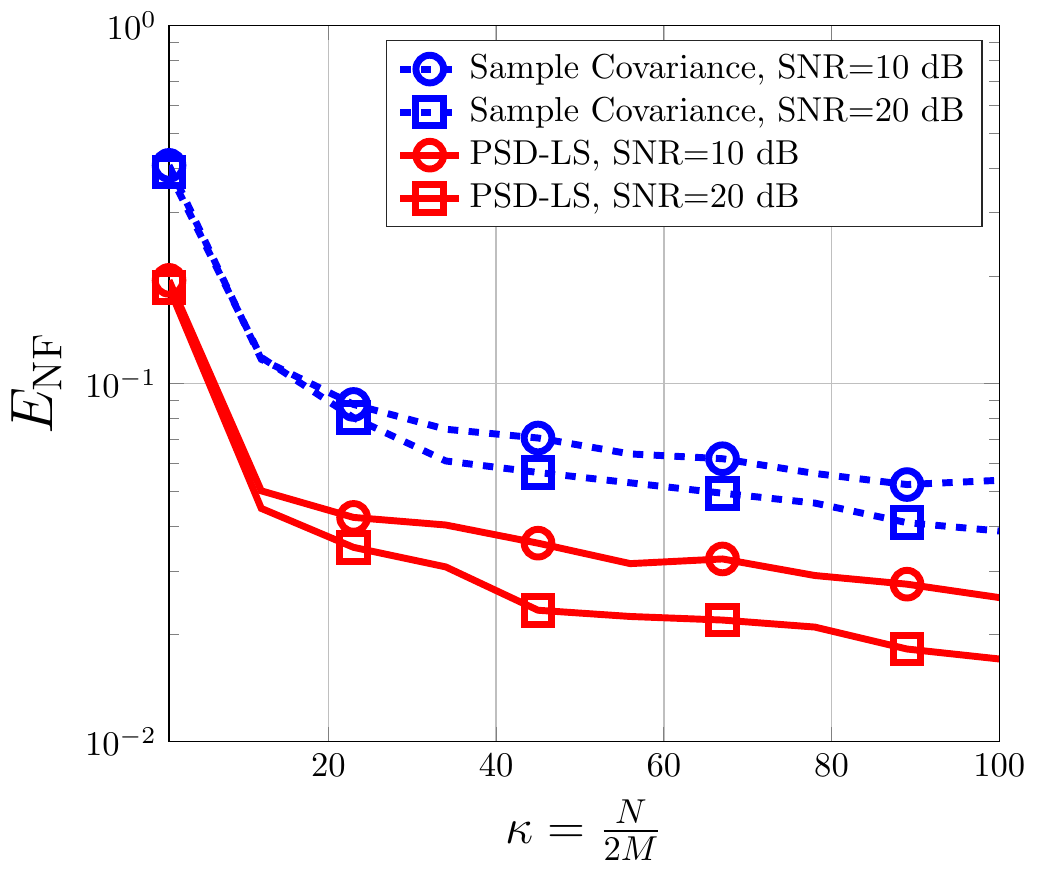}
		\label{fig:sub1}
	\end{subfigure}%
	\begin{subfigure}{.5\textwidth}
		\centering
		\includegraphics[width=.9\linewidth]{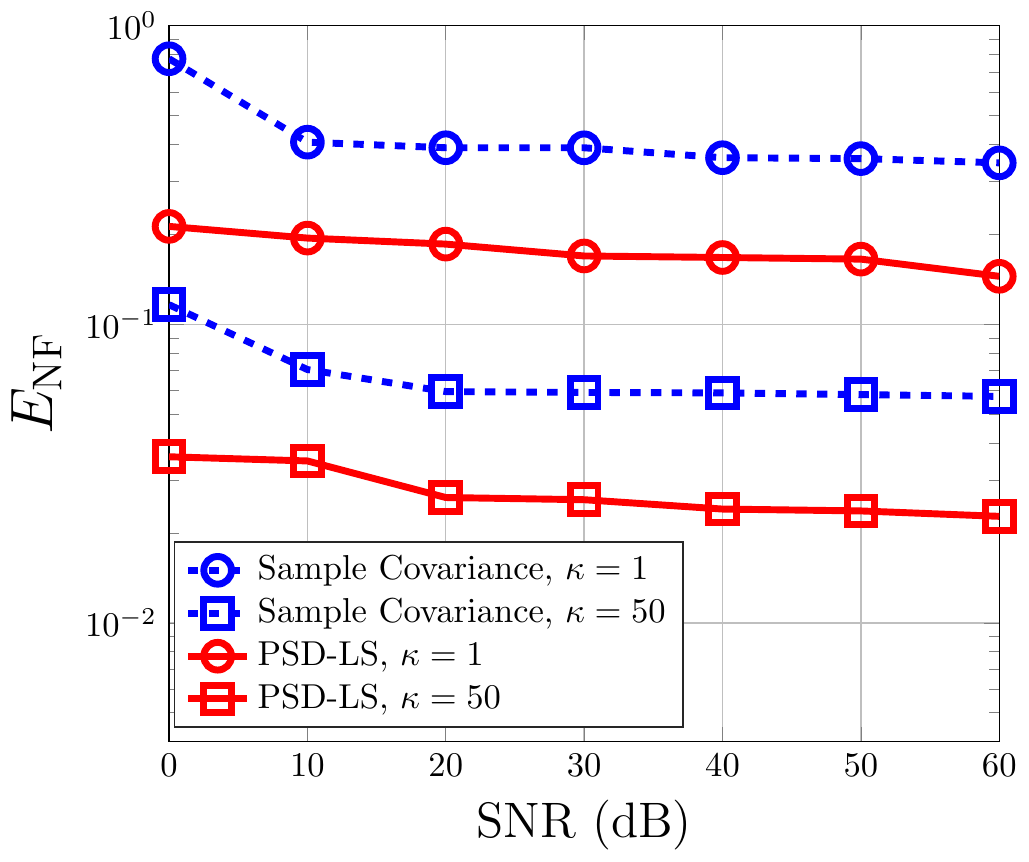}
	\end{subfigure}
	\caption{\small Channel covariance estimation error vs the sample size to channel dimension ratio (left) and SNR (right) for $M=32$.}
	\label{fig:UL_cov_est}
\end{figure*}
\subsection{UL Covariance Estimation Error}
The first experiment compares the UL covariance estimators. We consider a ULA of size $M=32$. To perform a Monte-Carlo simulation, we generate 100 random DP-ASFs according to the model explained earlier. For each random DP-ASF, we generate $N$ independent samples of the channel as $\hv_{ul}(1),\ldots,\hv_{ul}(N)$ and AWGN vectors $\zv (1),\ldots,\zv (N)$ to generate the noisy pilot signals $\yv_{ul} (i)=\hv_{ul} (i)+\zv(i),\,i=1,\ldots,N$. We repeat this for 50 different realizations of channel and noise, each time estimating the covariance given pilot signals and computing the estimation error. Therefore, the UL covariance estimation error is eventually averaged over $100 \times 50 = 5000$ random instances to empirically compute the error metric in \eqref{eq:E_NF}. Fig. \ref{fig:UL_cov_est} compares the normalized Frobenius norm error as a function of the sampling ratio (left figure) as well as the SNR (right figure). The error figures show that the method based on PSD-LS considerably improves estimation accuracy in comparison to the sample covariance estimator. The main reason is that, PSD-LS captures the structure of the dual-polarized covariance (see \eqref{eq:PSDLS}): it enforces the Kronecker structure by adopting the parametric covariance form $\sum_{i=1}^{n+\widehat{r}} \Wm_i \otimes \Sm_i$ and it constraints the coefficients $\Wm_i,~i=1,\ldots,n+\widehat{r}$ to be PSD in accordance with the DP-ASF being a PSD matrix-valued function. 

\subsection{UL-DL Covariance Transformation Error}
The second part of our proposed  scheme involves UL to DL covariance transformation as explained in Section \ref{sec:transformation}. Using the same simulation setup as introduced earlier, we study the DL covariance estimation error. In order to separately study the error of covariance transformation and that of UL covariance estimation from random channel samples, we consider two cases: in the first case we assume that the true UL covariance is given, perform the transformation and compute the error. In the second case, we assume that only the noisy pilot signals $\yv_{ul} (1),\ldots,\yv_{ul} (N)$ are given. Obviously, the estimation error is expected to be larger in the second case. Mathematically, in the first case we replace $\widehat{\Sigmam}_{\hv}^{ul}$ with $\Sigmam_{\hv}^{ul}$ in \eqref{eq:PSDLS} and estimate the ASF parametric form, whereas in the second case we compute $\widehat{\Sigmam}_{\hv}^{ul}$ as $\widehat{\Sigmam}_{\hv}^{ul}=\frac{1}{N}\sum_{i=1}^{N}\yv_{ul}  (i) \yv_{ul}  (i)^\herm - N_0 \mathbf{I}$. Finally, we also plot the error measures for UL covariance estimation from the noisy pilots to compare it to the other two other cases.
\begin{figure*}
	\centering
	\begin{subfigure}{.5\textwidth}
		\centering
		\includegraphics[width=.9\linewidth]{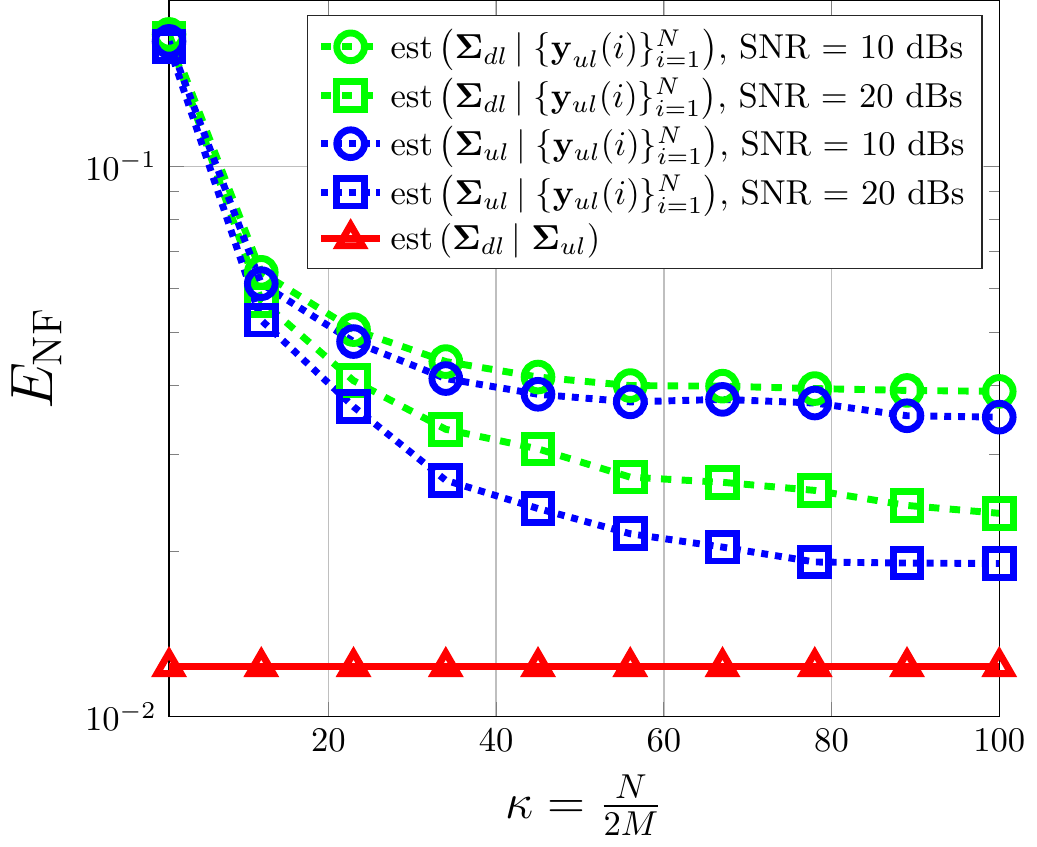}
	\end{subfigure}%
	\begin{subfigure}{.5\textwidth}
		\centering
		\includegraphics[width=.9\linewidth]{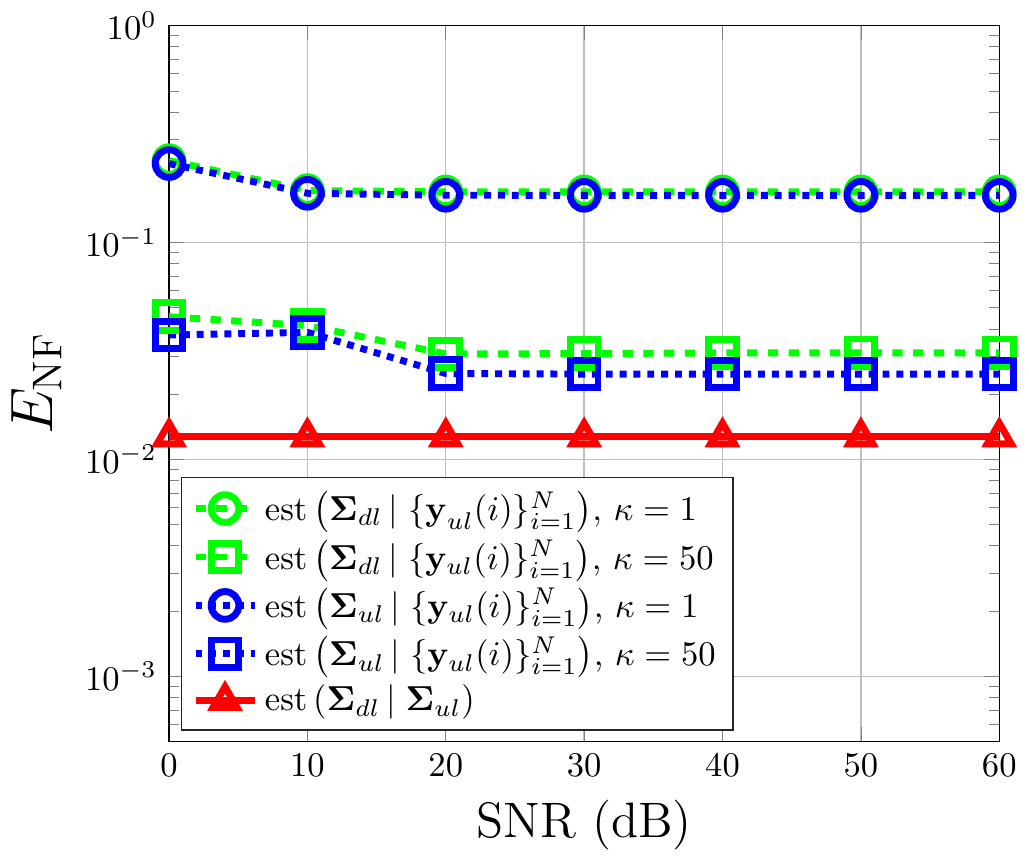}
	\end{subfigure}
	\caption{\small Channel covariance transformation error vs the sample size to channel dimension ratio (left) and SNR (right) for $M=32$. $\text{est}(\Xm | \Ym)$ denotes the estimate of $\Xm$ given $\Ym$.}
	\label{fig:cov_transform}
\end{figure*}
Fig. \ref{fig:cov_transform} illustrates the error vs sampling ratio (left figure) and error vs SNR curves (right figure). The figures show that, given a precise estimate of the UL covariance, the DL covariance can be estimated with a low error. In other words, the dominant source of error lies not in the UL-DL covariance transformation module, but in estimating the UL covariance from noisy pilots. This shows how effective the covariance transformation algorithm is. It also points to the more reasonable way of estimating the DL covariance. Collecting DL channel samples and using them to estimate the DL covariance is inefficient since it consumes too many resources to gather enough channel samples for a precise estimate of the covariance, especially since DL pilot measurements must be sent to the BS via closed-loop feedback. Instead, the BS can take in a sufficiently high number of UL channel samples, accurately estimate the UL covariance and perform UL-DL covariance estimation to obtain the DL covariance with much less error.  

\subsection{Sum-Rate Assessment of ACS}
The third part of the implementation developed in this work was dedicated to an efficient common DL channel training and multi-user precoding. For any DL pilot dimension, the ACS approach enables the BS to stably estimate effective user channel vectors while maximizing the effective channel matrix rank. In this section we present results to study the performance of ACS in terms of sum-rate, for various DL pilot dimensions and SNR values. As a multi-user scenario, we consider $K$ users, with covariances that are generated as follows. Define the four rectangular functions: $\Ic_1(\xi)=\rect_{[-0.8 , -0.6]}$, $\Ic_2(\xi)=\rect_{[-0.45 , -0.25]}$,
$\Ic_3(\xi)=\rect_{[0.1 , 0.3]}$, $\Ic_4(\xi)=\rect_{[0.5,0.7]}$. Each of these functions represents angular power density of a single scatterer in the environment. We assume that the DP-ASF components of a single generic user are (semi-)randomly generated as 

\begin{equation}
\begin{aligned}
\gammaH (\xi) &= \frac{\alpha}{Z} \left( \rect_{\Ic_i} (\xi) + \rect_{\Ic_j} (\xi)+\right)  +\frac{1-\alpha}{2}(\delta (\xi - \xi_1) +\delta (\xi - \xi_2)  ),
\end{aligned}
\end{equation}
where $i,j\in \{1,2,3,4\}$ are uniformly generated random indices, $\alpha=0.5$ is the continuous-to-discrete ASF ratio, $Z$ is a normalizing scalar such that $\int_{-1}^1\frac{1}{Z} \left( \rect_{\Ic_i} (\xi) + \rect_{\Ic_i} (\xi)\right) d\xi =1,$ and $\xi_1,\, \xi_2$ are discrete AoAs, generated independently and uniformly at random over $[-1,1]$. In order to generate the vertical ASF, similar to the previous section, we assume that the support of the continuous part of $\gammaV$ is a slightly shifted version of the support of the continuous part of $\gammaH$. We also assume that they share the same support for their discrete part. Then we have
\begin{equation}
\begin{aligned}
\gammaV (\xi) &= \frac{\alpha}{Z} \left( \rect_{\Ic_i'} (\xi) + \rect_{\Ic_j'} (\xi)+\right) +\frac{1-\alpha}{2}(\delta (\xi - \xi_1') +\delta (\xi - \xi_2')  ),
\end{aligned}
\end{equation}
where $\Ic_i'=\Ic_i+0.1$, $\Ic_j'=\Ic_j + 0.1$ and $\xi_1'=\xi_1$, $\xi_2'=\xi_2$. Besides, we suppose the H-V cross-correlation function to take on the form $\rho (\xi) = \beta \sqrt{\gammaH (\xi) \, \gammaV (\xi)}.$
\begin{figure*}
	\centering
	\begin{subfigure}{.5\textwidth}
		\centering
		\includegraphics[ width=0.92\textwidth]{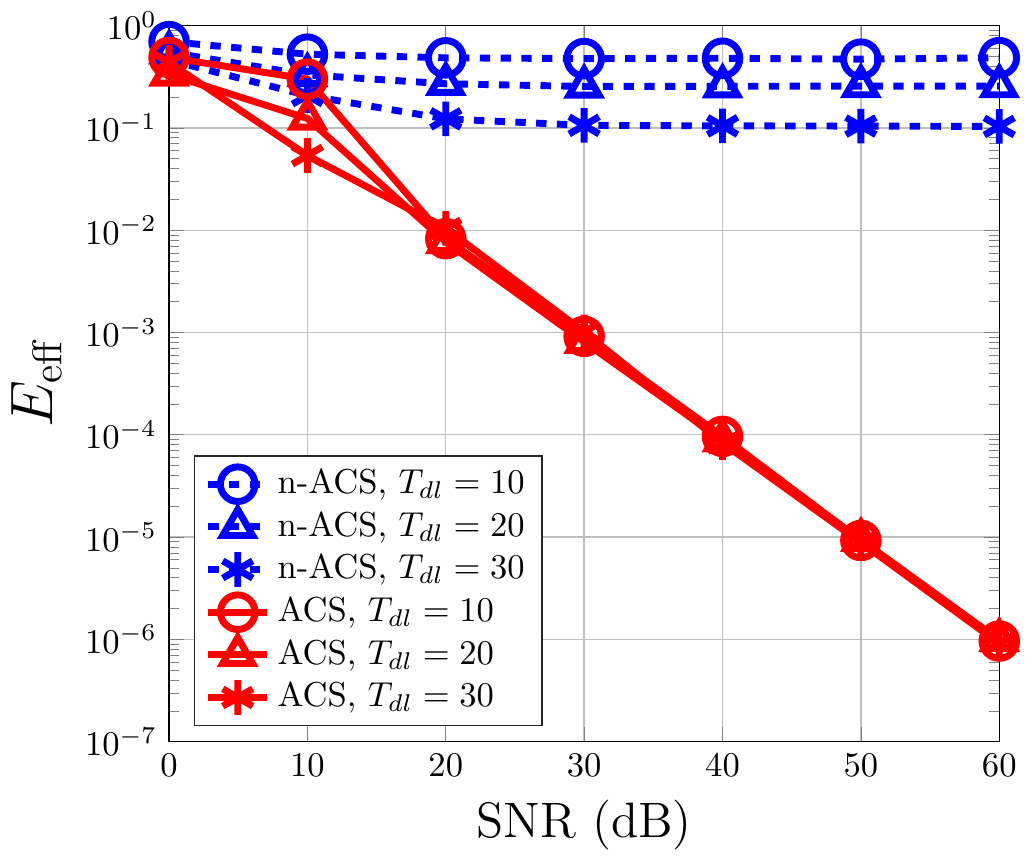}
		\caption{$M=64$, $K=8$}
		\label{fig:eff_err_M_32}
	\end{subfigure}%
	\begin{subfigure}{.5\textwidth}
		\centering
		\includegraphics[ width=0.9\textwidth]{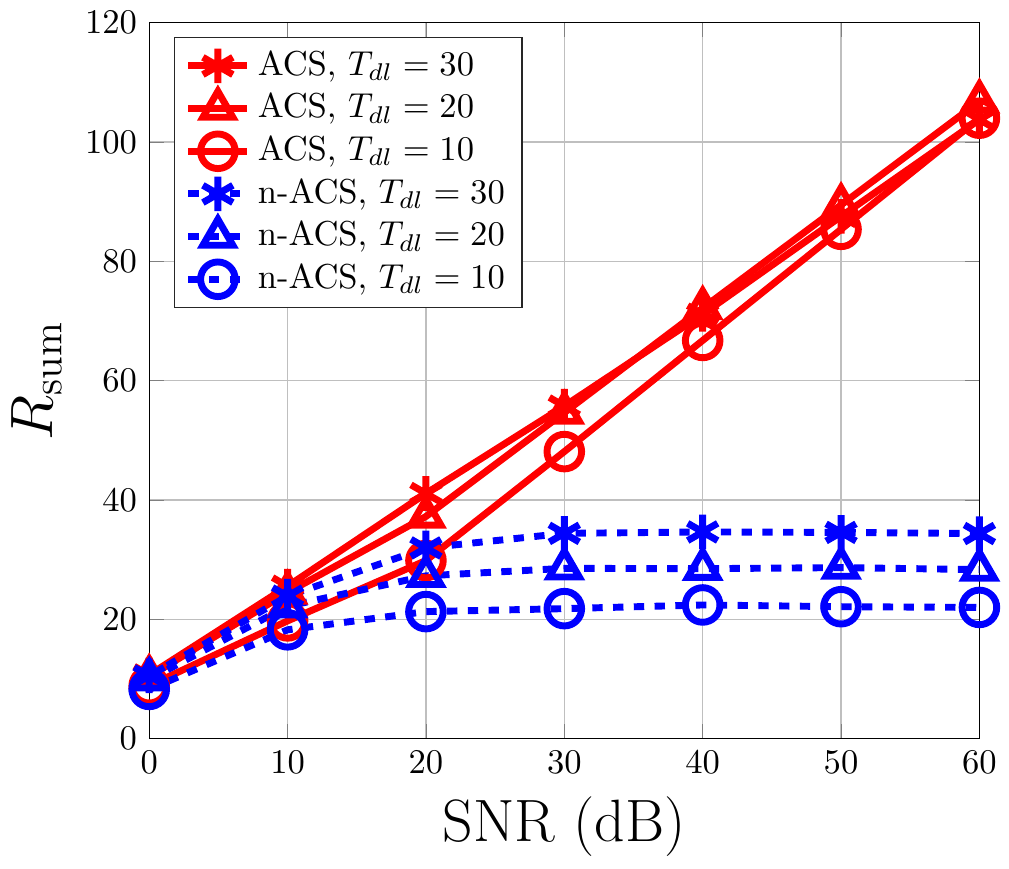}
		\caption{$M=64$, $K=8$}
		\label{fig:rate_vs_SNR_M_32}
	\end{subfigure}

	\begin{subfigure}{.5\textwidth}
		\centering
		\includegraphics[ width=0.92\textwidth]{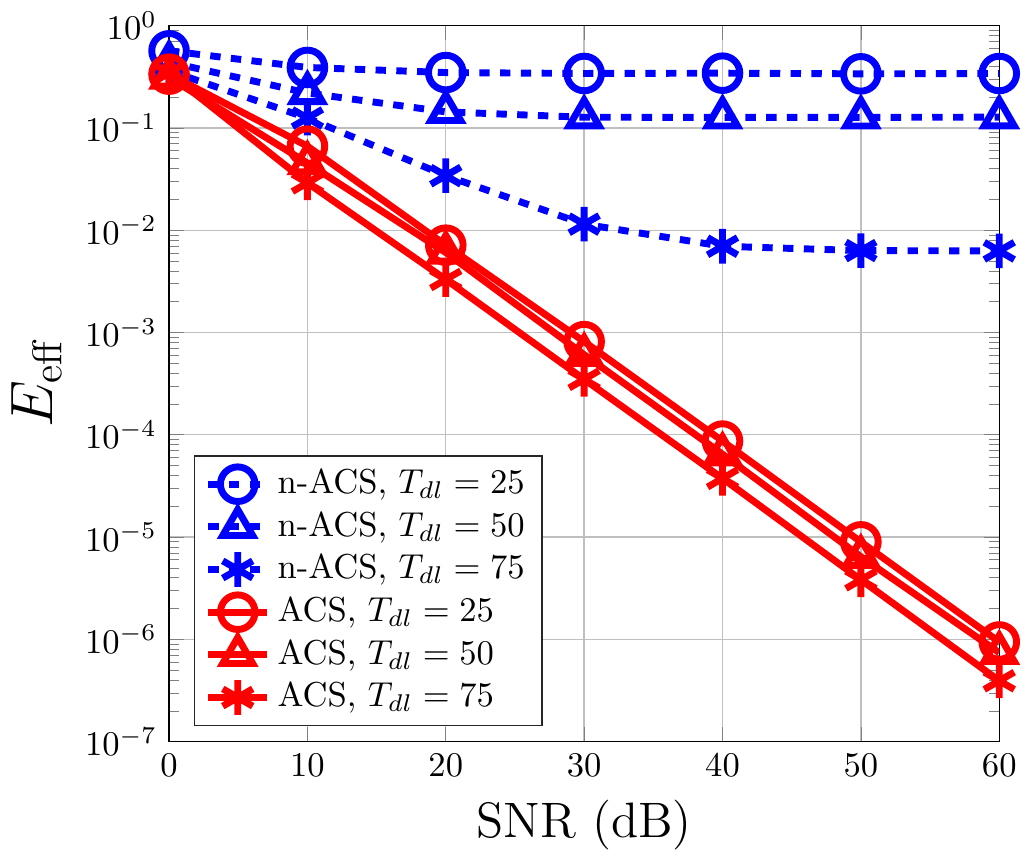}
		\caption{$M=32$, $K=6$}
		\label{fig:eff_err}
	\end{subfigure}%
	\begin{subfigure}{.5\textwidth}
		\centering
		\includegraphics[ width=0.9\textwidth]{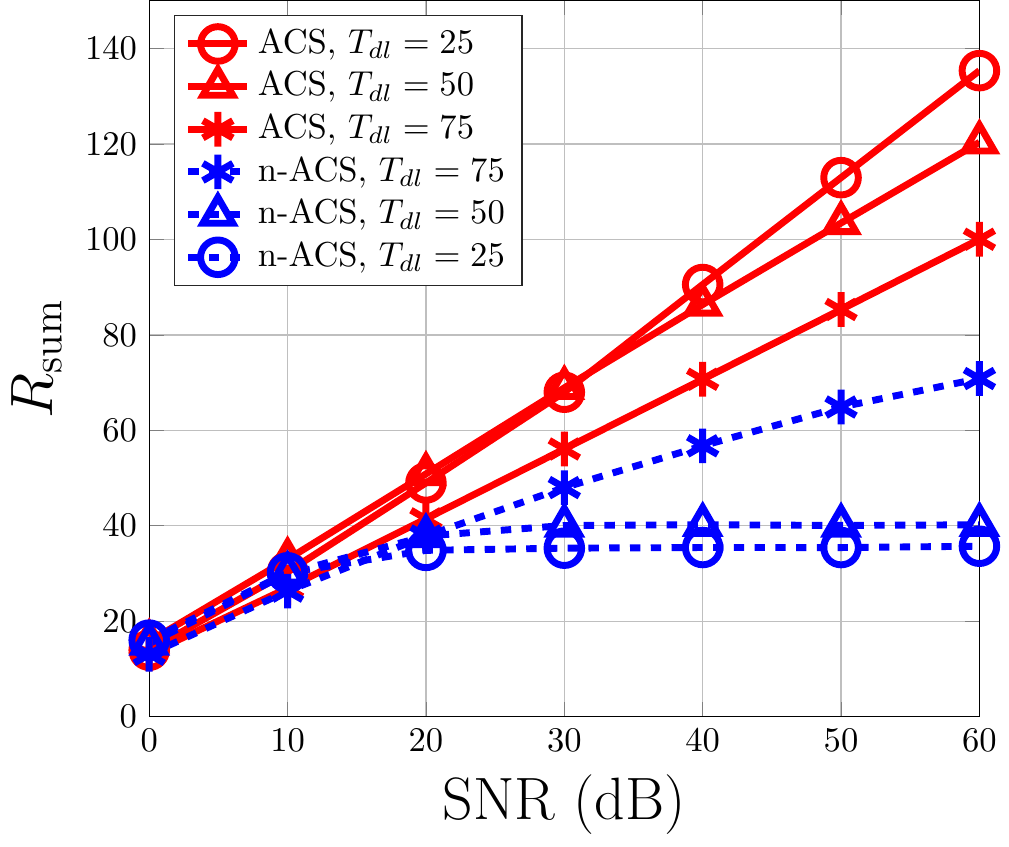}
		\caption{$M=32$, $K=6$}
		\label{fig:rate_vs_SNR}
	\end{subfigure}
	\caption{\small Effective channel estimation error vs SNR (left figures), and sum-rate vs SNR (right figures) comparison, with $M=64$ and $K=8$ for the upper row and $M=32$ and $K=6$ for the lower row. ``n-ACS" refers to the non-ACS method, implemented using random pilot vectors and no sparsification. }
	\label{fig:ACS_results}
\end{figure*}

Assuming a dual-polarized ULA with antennas, we generate semi-random DP-ASFs for $K$ users as explained above. Then we compute their covariances using \eqref{eq:ch_cov}. In order to isolate the effect of sparsification from the other parts of the implementation (UL covariance estimation, UL-DL covariance transformation), we assume that the true DL covariance for each user is available at the BS. For a given DL pilot dimension, we implement ACS by designing the DL precoder via the MILP in \eqref{opt:P_MILP_Thm} for common training and estimation of the effective channels. Next the users are served through a ZFBF scheme and the sum-rate is computed via \eqref{eq:rate_ub}. We compare the performance of ACS with that of non-ACS training. The latter case is equivalent to setting the precoder in \eqref{eq:cs_eq_1} to $\Bm = \mathbf{I}_{2M}$, i.e. \textit{not} sparsifying the channels. Apart from the sum-rate metric, we also compute the mean squared error (MSE) of estimating the effective channels via the following formula:
\begin{equation}\label{eq:E_eff}
E_{\text{eff}} = \frac{1}{|\Kc^\star|}\sum_{k\in \Kc^\star }\bE \left\{ \frac{\left\Vert \Bm \left(\hdlk - \widehat{\hv}_{dl,k} \right)\right\Vert^2}{\Vert \Bm \hdlk \Vert^2}  \right\}
\end{equation}
where we recall that $\Kc^\star$ is the set of users selected to be served by the MILP. 

\begin{figure}[t]
	\begin{subfigure}{.5\textwidth}
		\includegraphics[height=7.2cm, width=8cm]{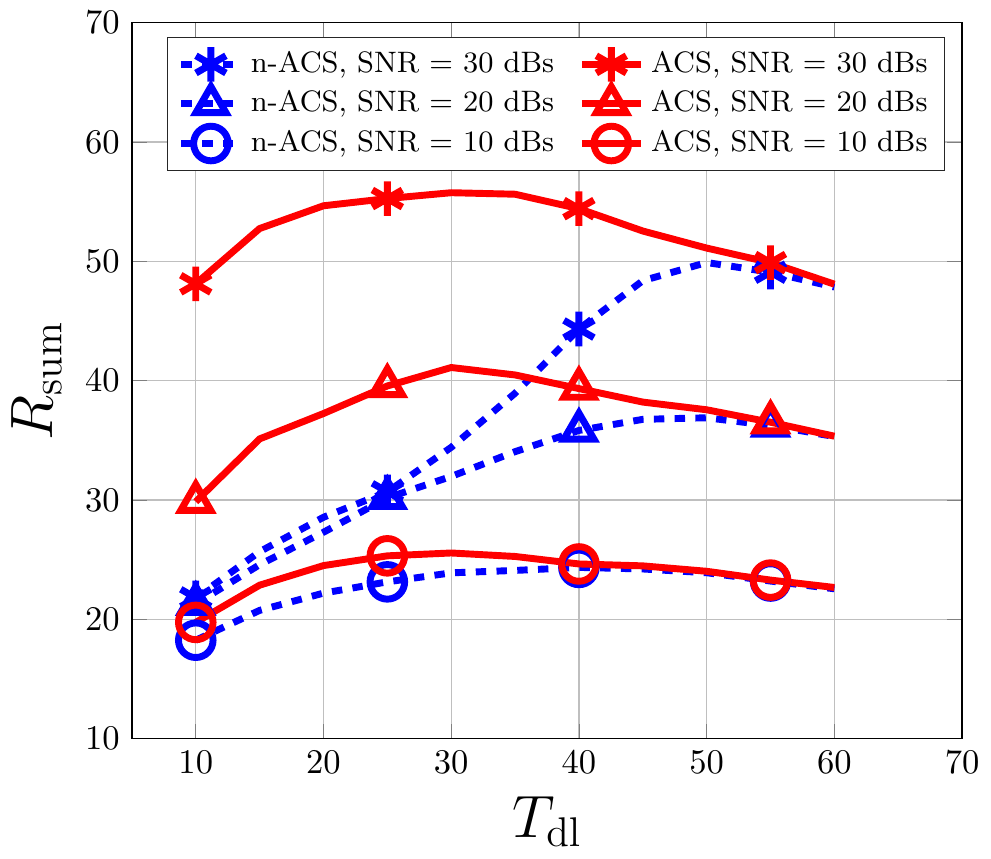}
		\caption{$M=32$, $K=6$}
		\label{fig:Rate_vs_Tdl_M_32}
	\end{subfigure}%
	\begin{subfigure}{.5\textwidth}
		\includegraphics[height=7cm, width=8cm]{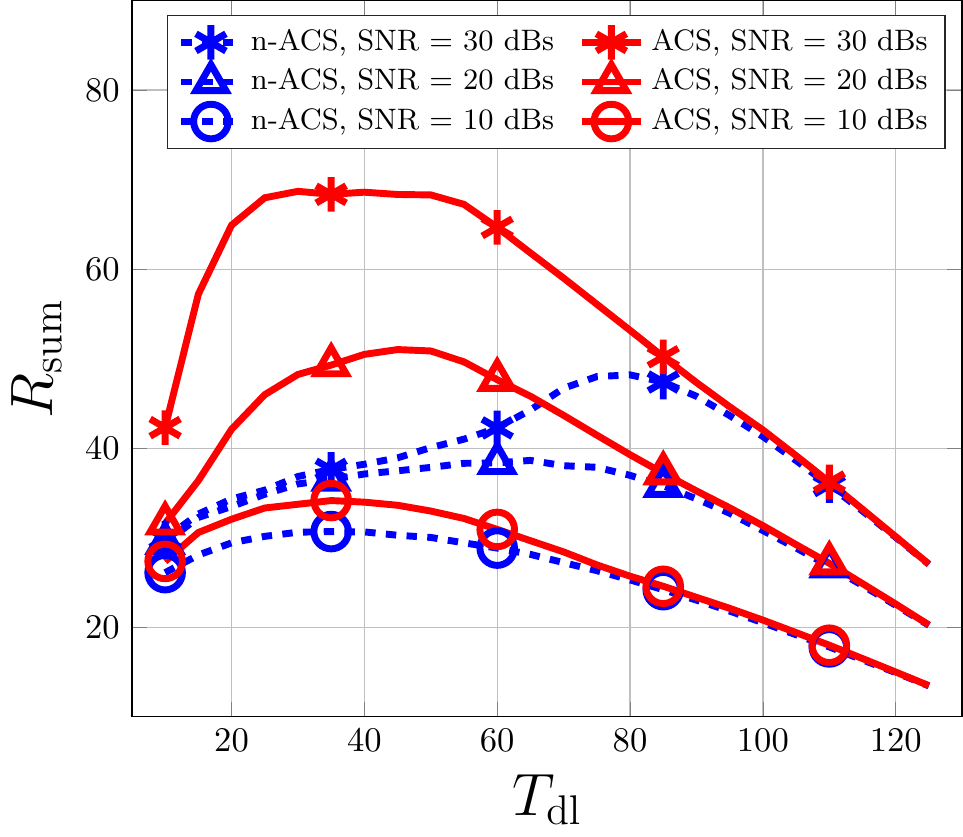}
		\caption{$M=64$, $K=8$}
		\label{fig:Rate_vs_Tdl_M_64}
	\end{subfigure}
	\caption{sum-rate vs pilot dimension curves.}
	\label{fig:rate_vs_Tdl}
\end{figure} 
See the results of Fig. \ref{fig:ACS_results}, in which we have plotted the effective error and sum-rate curves as a function of SNR for two different system setups, where in one the array size is $M=32$ and serves $K=6$ users, and in the other the array size is $M=64$ and serves $K=8$ users. In each case, we illustrate the results for different values of DL pilot dimension $\Tdl$. First, note that with the ACS method, the effective channel estimation error decreases linearly with the the increase of $\log( \text{SNR})$ (or the SNR in dBs). In contrast, with n-ACS this error is saturated to a fixed value and does not decrease by increasing the SNR. This behavior is a direct outcome of Lemma \ref{lem:stable_rec}, which states that if the pilot dimension is less than the sparsity order of the effective channel $\Bm \hv$, then the estimation error does not tend to zero with increasing the SNR. Conversely, a stable estimation is possible if the pilot dimension is larger than the sparsity order of the effective channel, which is enabled by ACS through the MILP.  
	
	Stable estimation of the effective channel is also important in achieving an interference-free DL transmission. This can be seen by comparing the sum-rate curves of the ACS and non-ACS methods (Figs. \ref{fig:rate_vs_SNR} and \ref{fig:rate_vs_SNR_M_32}) in the high-SNR regime. With the non-ACS method, the sum-rate saturates to a fixed value as SNR increases, demonstrating an interference-limited behavior. However, with ACS the sum-rate increases linearly with $\log (\text{SNR})$, achieving much higher sum-rates in the medium-to-high-SNR regime. 
	
	Fig. \ref{fig:rate_vs_Tdl} illustrates the sum-rate vs pilot dimension curves for the two setups as before and for various SNR values. The point of this figure is to show the relationship between the pilot dimension and the sum-rate. From \eqref{eq:rate_ub} we note that the pilot dimension controls a trade-off in consuming time-frequency resources: increasing the pilot dimension results in better channel estimation and therefore less interference, which increases the argument inside the logarithm in \eqref{eq:rate_ub}, but it decreases the pre-log factor $1-\Tdl/T$ and leaves fewer resources for data transmission. Therefore, we expect that there exists an optimal pilot dimension which maximizes the sum-rate for any given setup. This can be seen from the curves of Fig. \ref{fig:rate_vs_Tdl}. Note that in all setups the ACS method achieves higher sum-rates compared to the non-ACS method for the same pilot dimension, in some cases achieving almost twice the sum-rate of the non-ACS method.


\section{Conclusion}
We proposed a thorough implementation of a multi-user FDD massive MIMO system with dual-polarized antenna elements. We addressed the dimensionality challenge of such systems through a three-step process: (1) UL covariance estimation from limited, noisy UL channel samples, (2) UL-DL covariance transformation, and (3) active channel sparsification and multi-user precoding for DL channel training and interference-free beamforming. Using error and sum-rate metrics we showed that our approach is successful for implementing dual-polarized FDD massive MIMO systems, overcoming the curse of prohibitively large dimensions and limited time-frequency resources.

	{\footnotesize
		\bibliographystyle{IEEEtran}
		\bibliography{references}
	}
	
\end{document}